\newcommand{\mes}{|\psi^+\rangle\langle\psi^+|}
\newcommand{\doi}{\Delta\otimes\id}
\newcommand\ket[1]{\left|#1\right>}
\newcommand\bra[1]{\left<#1\right|}
\newcommand\braket[2]{\left<#1\right|\left.\!\!#2\right>}
\newcommand\projj[2]{\left|#1\right>\!\left<#2\right|}
\newcommand\proj[1]{\left|#1\right>\!\left<#1\right|}
\DeclareMathOperator{\Tr}{\operatorname{Tr}}
\newcommand{\deff}{\vcentcolon=}
\newcommand{\rab}{\rho_{AB}}
\newcommand{\sab}{\sigma_{AB}}
\newcommand{\ra}{\rho_{A}}
\newcommand{\Lam}{\Lambda}
\newcommand{\id}{\textup{id}}
\newcommand\bnorm[2]{\left\|{#1}\right\|_{#2}}
\newcommand{\llangle}{\langle\langle}
\newcommand{\rrangle}{\rangle\rangle}
\newcommand{\I}{\openone}
\newcommand{\SR}{\textup{SR}}
\newcommand{\OSR}{\textup{OSR}}
\newcommand{\CDP}{\textup{CDP}}
\newcommand{\reff}[1]{Eq.~\eqref{#1}}
\newcommand*{\cD}{\mathcal{D}}
\newcommand*{\cH}{\mathcal{H}}
\newcommand*{\cL}{\mathcal{L}}
\newcommand*{\cT}{\mathcal{T}}
\newtheorem{Proposition}{Proposition}
\newtheorem{Theorem}{Theorem}
\newtheorem{Lemma}{Lemma}
\begin{document}

\title{Channel discrimination power of bipartite quantum states}

\author{ Matteo Caiaffa, Marco Piani}
\affiliation{SUPA and Department of Physics, University of Strathclyde, Glasgow G4 0NG, UK}

\begin{abstract}
We quantify the usefulness of a bipartite quantum state in the ancilla-assisted channel discrimination of arbitrary quantum channels, formally defining a worst-case-scenario channel discrimination power for bipartite quantum states. We show that such a quantifier is deeply connected with the operator Schmidt decomposition of the state. We compute the channel discrimination power exactly for pure states, and provide upper and lower bounds for general mixed states. We show that highly entangled states can outperform any state that passes the realignment criterion for separability. Furthermore, while also unentangled states can be used in ancilla-assisted channel discrimination, we show that the channel discrimination power of a state is bounded by its quantum discord.
\end{abstract}

\maketitle

A quantum channel is the most general linear transformation a quantum system can undergo, capturing mathematically the notion of physical process and playing the role of  basic block in quantum information processing~\cite{nielsen2010quantum}. A fundamental task that falls under the umbrella of quantum metrology~\cite{giovannetti2004quantum,toth2014quantum} is channel discrimination~\cite{DArianoPP01}.
Channel discrimination is the task of telling apart two or more \emph{known} channels which are each applied with some random a priori distribution to an input of our choice; think of the situation where we want to probe the presence or absence of a known magnetic field. 
In the prototypical and simplest case,  two channels are applied just once with equal a priori probability distribution, and we perform a measurement on the output probe trying to infer the which-channel information. The goal is that of identifying the channel applied, with the highest possible probability of success. 
Channel discrimination is typically performed by tailoring the state of the input probe to the channels to be discriminated. In general, the wrong choice of input might not only make the probability of correct identification less than optimal, but it might make it altogether impossible, in the sense that, for some choice of input state, the output state could be the same for both channels, even when the latter differ.

There can be advantages in channel discrimination by making use of correlations between the  probe and a reference ancilla. One possible advantage is that correlations may lead to a probability of success in the discrimination that is higher than what possible without the use of an ancillary system
~\cite{DArianoPP01,kitaev1997quantum,childs2000quantum,d2001using,
acin2001statistical,giovannetti2004quantum,gilchrist2005distance,
rosgen2005hardness,sacchi2005optimal,sacchi2005entanglement,lloyd2008enhanced,
rosgen2008additivity,watrous,watrous2008distinguishing}. 
In general, achieving such a higher probability of success requires (i) to tailor the probe-ancilla input state to the specific channels to be discriminated and (ii) input entanglement between probe and ancilla. Another advantage provided by probe-ancilla correlations, on which we focus in this Letter, is that they may allow to discriminate between an arbitrary pair of known channels, without the need to tailor the input probe-ancilla state to avoid `being blind' to the difference between the channels. This fact is at the basis of the celebrated Choi-Jamio{\l}kowski isomorphism~\cite{choi1975completely,jamiolkowski1972linear} between linear maps and linear operators, and allows to perform channel tomography -- that is, to identify an unknown channel with many uses of the unknown channel -- with a fixed input probe-ancilla state~\cite{altepeter2003ancilla}. Such a feat can be achieved even in the absence of entanglement, and Ref.~\cite{altepeter2003ancilla} already identified the Operator Schmidt Rank (OSR; to be defined later) of the probe-ancilla input state as the key property determining whether such state makes ancilla-assisted tomography possible. Nonetheless, the study of the usefulness of correlations in fixed-input ancilla-assisted channel discrimination and channel tomography has been limited~\cite{jenvcova2016conditions}. In this Letter, we shed light on ancilla-assisted channel discrimination, providing an analysis of how the Operator Schmidt Decomposition (OSD; to be defined later) of the probe-ancilla input state affects the quality of the discrimination. In particular, we introduce a worst-case quantifier for the performance of a probe-ancilla state in channel discrimination, the \emph{Channel Discrimination Power} (CDP), and we provide general upper and lower bounds to it in terms of the OSD of the state. We compute the exact CDP of pure states.  Remarkably, we show that, while correlated but unentangled states can have non-zero CDP, and allow the discrimination of any pair of channels as long as they have maximal OSR, they cannot have maximal CDP. More in general, we provide a non-trivial bound on the channel discrimination power of any state -- entangled or unentangled -- that passes the so-called realignment (or computable cross-norm) criterion for separability~\cite{chen2003matrix,rudolph2004computable}. Furthermore, we prove that the general quantumness of correlations known as quantum discord~\cite{RevModPhys.84.1655} provides a bound for the channel discrimination power of a bipartite state.

\emph{Notation and preliminaries.}
We will limit ourselves to finite-dimensional systems. Hence, each Hilbert space $\cH$ will be equivalent to $\mathbb{C}^d$ for some integer dimension $d$. The space of linear operators $L$ (equivalent to matrices) on $\cH$ will be indicated by $\cL(\cH)$. 
We will be interested in the $p$-norms
$\|L\|_p
:=
\Big(
	\Tr
		\big(
			(L^\dagger L)^{\frac{p}{2}}
		\big)
\Big)^\frac{1}{p},
$
for the values $p=1,2,\infty$~\cite{horn2012matrix}. 
 
We indicate by $d_X$ the dimension of a system $X$ with Hilbert space $\cH_X$. We will focus on bipartite systems $AB$, and, unless stated otherwise, we will define $d_{\min} = \min\{d_A,d_B\}$. A quantum state on $\cH$ corresponds to a density operator $\rho$ belonging to convex subset $\cD(\cH)\subset \cL(\cL)$ of operators that have unit trace and are positive semidefinite. We indicate by $\Tr$ the trace operation, by $\Tr_X$ the partial trace on system $X$.
We denote by $\rho_X$ the (reduced) state of system $X$. The space $\cL(\cH)$ can be made into a Hilbert space itself by considering the Hilbert-Schmidt inner product $\llangle C | D \rrangle := \Tr(C^\dagger D)$ between two operators $C,D\in L(\cH)$, which induces the 2-norm via $\|C\|_2 = \sqrt{\llangle C | C \rrangle}$.

The trace distance between two density matrices $\rho$ and $\sigma$ is defined as $D(\rho,\sigma):=\frac{1}{2}\|\rho-\sigma\|_1$~\cite{nielsen2010quantum}. 
Its operational meaning is that of bias in the optimal discrimination of the two states. More specifically, the probability of correctly identifying the state of a system that is a priori in the state $\rho$ or $\sigma$ each with $50\%$ chance, in the single-shot scenario when one is given one copy of the state to measure, is $\left(1+D(\rho,\sigma)\right)/2$. The trace distance $D(\rho,\sigma)$ varies between $0$ (for identical states) to $1$ (for perfectly distinguishable states, which are mathematically orthogonal, $\llangle \rho | \sigma \rrangle = 0$).
A bipartite state $\rho_{AB}$ is unentangled (or separable)
if it is the convex combination of product (or uncorrelated) states~\cite{revent},
\begin{equation}
\label{eq:separable}
\rho_{AB} = \sum_i p_i \rho^A_i \otimes \rho^B_i
\end{equation}
with $\{p_i\}$ a probability distribution. That is, a state is unentangled if all the correlations that the systems $AB$ exhibit have an explanation in terms of shared classical randomness. A state is entangled if it is not separable. To decide whether a given state is separable or not (that is, whether it admits a decomposition like \eqref{eq:separable} or not) is a hard problem in general, but numerous criteria have been devised to tackle it~\cite{revent}.

The physical evolution of a quantum system is formally described in terms of quantum channels~\cite{nielsen2010quantum}. In general, one considers evolutions from an input system $X$ to an output system $Y$, representing evolution in time or general transfer of information -- either in space or in time -- from one system to another. Formally, a quantum channel from $X$ to $Y$ is a completely-positive trace-preserving linear map $\Lambda$ from $\cL(\cH_X)$ to $\cL(\cH_Y)$.

\emph{Operator Schmidt Decomposition.}
Any vector state $\ket{\psi}_{AB}\in \cH_A\otimes \cH_B$ admits a Schmidt decomposition~\cite{nielsen2010quantum}
\begin{equation}
\label{eq:SD}
\ket{\psi}_{AB} = \sum_{i=1}^{\SR(\psi)} \sqrt{p_i}\ket{a_i}_A\otimes\ket{b_i}_B,
\end{equation}
with $\{\sqrt{p_i}\}$ positive numbers that satisfy $\sum_{i=1}^{\SR(\psi)} (\sqrt{p_i})^2 = \sum_{i=1}^{\SR(\psi)}p_i = \braket{\psi}{\psi}$. Since $\braket{\psi}{\psi}=1$, we can think of $\{p_i\}$ as of a probability distribution, whose elements we can imagine ordered, $p_1\geq p_2 \geq\ldots$, without loss of generality. Furthermore $\{\ket{a_i}\}$ and $\{\ket{b_i}\}$ are some special and $\ket{\psi}$-dependent orthonormal bases for $\cH_A$ and $\cH_B$, respectively. Here $\SR(\psi)$ denotes the Schmidt rank of $\ket{\psi}_{AB}$, which is the number of non-zero $p_i$'s, and satisfies $\SR(\psi)\leq d_{\min}$.
Let $\rho_{AB}$ be a density matrix for the bipartite system $AB$. We can consider it as a vector in $\cL(\cH_A\otimes\cH_B)$, and hence derive the Operator Schmidt Decomposition (see~\cite{aniello2009relation,lupo2008bipartite} and references therein)
\begin{equation}
\label{eq:OSD}
\rho_{AB} = \sum_{i=1}^{\OSR(\rho)}r_i A_i \otimes B_i.
\end{equation}
Here $OSR(\rho)$ is the number of non-zero Operator Schmidt Coefficients (OSCs) $r_i$, and $\{A_i\}_{i=1}^{d_A^2}$ and $\{B_i\}_{i=1}^{d_B^2}$ are some ($\rho$-dependent) orthonormal bases for the spaces $\cL(\cH_A)$ and $\cL(\cH_B)$, respectively. The OSR is the minimum number of product terms that need to enter in any decomposition of $\rab$. Since $\rho_{AB}$ is Hermitian, one can argue that the two orthonormal operator bases in~\eqref{eq:OSD} can be (but need not be) taken to be composed of Hermitian operators. The OSCs are the singular values of the correlation matrix $[C_{ij}(\rho_{AB})]_{ij}$, with $C_{ij}(\rho_{AB}):=\llangle F_i\otimes G_j | \rho_{AB}\rrangle$, where $\{F_i\}$ and $\{G_j\}$ are arbitrary local orthonormal bases for operators. We will take the OSC to be ordered as $r_1 \geq r_2 \geq \ldots $;
they satisfy $\sum_ir_i^2 = \llangle \rho | \rho \rrangle = \Tr(\rho^2)$. Notice that $\OSR(\rho_{AB})\leq d_{\min}^2$, as the vector space $L(\cH_A)$ has dimension $d_A^2$ (similarly for $L(\cH_B)$). It is immediate to realize that the SD of a pure state $\ket{\psi}_{AB}$ and the OSD of the corresponding density matrix $\proj{\psi}_{AB}$ are related: indeed, for a pure state, $r_i=\sqrt{p_k}\sqrt{p_l}$, $A_i = \projj{a_k}{a_l}$, and $B_i = \projj{b_k}{b_l}$, for $i=(k,l)$ a multi-index.

A powerful criterion of separability is the computable cross-norm (or realignment) criterion~\cite{chen2003matrix,rudolph2004computable}, which states that, if $\rho_{AB}$ is unentangled, then its OSCs satisfy $\sum_i r_i \leq 1$. Thus, if one finds $\sum_i r_i > 1$, one can conclude that $\rho_{AB}$ is entangled.

\emph{Channel discrimination and channel tomography.}
Channel discrimination is a generalization of state discrimination, where the objects to tell apart are now channels. One can define a physically meaningful notion of distance between two channels  $\Lambda_0$ and $\Lambda_1$ via~\cite{watrous}
\begin{equation}
D(\Lambda_0,\Lambda_1):=\max_{\rho\in \cD(\cH_S)} D\left(\Lambda_0[\rho],\Lambda_1[\rho]\right),
\end{equation}
that is by considering the trace distance of the output states of a probe upon acting on the same input state of the probe. One fundamental---and relevant for applications---way in which quantum physics differs from classical physics, is that the distinguishability of two channels, as captured by $D(\Lambda_0,\Lambda_1)$, can be enhanced by the use of entanglement between the input probe and an ancilla~\cite{DArianoPP01,kitaev1997quantum,childs2000quantum,d2001using,
acin2001statistical,giovannetti2004quantum,gilchrist2005distance,
rosgen2005hardness,sacchi2005optimal,sacchi2005entanglement,lloyd2008enhanced,
rosgen2008additivity,watrous,watrous2008distinguishing}. One can prove that the best ancilla system can be chosen to be a copy $S'$ of the input probe system $S$, so that we can define the so-called diamond distance between $\Lambda_0$ and $\Lambda_1$ as
\begin{equation}
\label{eq:diamond}
D_\diamond(\Lambda_0,\Lambda_1):=D(\Lambda_{0,S}\otimes\id_{S'},\Lambda_{1,S}\otimes\id_{S'}),
\end{equation}
where $\id_X$ indicates the identity map on system $X$.
The diamond distance formalizes the notion of best possible one-shot distinguishability between two quantum channels. 

In general, it is not possible to distinguish arbitrary quantum channels in $\cT(\cH_X,\cH_Y)$ by means of their action on an input state $\rho\in\mathcal{D}(\cH_X)$ of the probe alone that is independent of the channels considered, as there are always two different channels that have the same effect on a given input state~\footnote{For example, consider the case where $\Lambda_0$ is the identity channel, so that $\Lambda_0[\sigma]=\sigma$ for all $\sigma$, and $\Lambda_1$ is the channel with fixed output $\rho$. Then, obviously, $D(\Lambda_0[\rho],\Lambda_1[\rho]) = 0$, even if the two channels are very different, and even having many copies of $\Lambda_i[\rho]$ we cannot tell the two channels apart.}. 
Nonetheless, it is always possible to tell two arbitrary channels in $\cT(\cH_X,\cH_Y)$ apart by `feeding' them with many different input states $\rho_k$. Indeed, as long as $\{\rho_k\}$ constitutes a basis for $\cL(\cH_X)$, and as long an arbitrary number of uses of the channel are allowed, one can perform a tomographic reconstruction of a channel $\Lambda$~\cite{nielsen2010quantum}, even in the case where there is no prior information at disposal about the channel (see also Figure~\ref{fig:NE}).

Remarkably, it is possible to perform tomography of the channel, or the non-trivial discrimination of an arbitrary number of channels, even with just a fixed input state, as long as one uses an ancilla: this constitutes the framework of ancilla-assisted channel discrimination and channel tomography (see Figure~\ref{fig:diamond}). Ref.~\cite{altepeter2003ancilla} proves both theoretically and experimentally that channel tomography is possible also when the state $\rho_{AB}$ of probe $A$ and ancilla $B$ is separable. The key condition that permits channel tomography on $A$ with $\rho_{AB}$ is that $\OSR(\rho_{AB})=d_{A}^2$. Indeed, one has
\[
\Lambda_A[\rab]=\sum_{i=1}^{\OSR(\rho)}r_i  \Lambda[A_i] \otimes B_i,
\]
and, as long as the state has $\OSR(\rho)=d_A^2$, one can reconstruct the action of the map $\Lambda$ on an arbitrary state $\sigma\in D(\cH_A)$ as follows:
\[
\Lambda[\sigma]=\sum_{i=1}^{d_A^2} \frac{1}{r_i}\llangle A_i | \sigma \rrangle \Tr_B(\openone_A \otimes B_{i,B}^\dagger \Lambda_A[\rab]).
\]
We improve on this basic observation, by introducing and studying a simple and meaningful measure of merit for the usefulness of a fixed probe-ancilla state in channel discrimination.

\emph{Channel discrimination power.}
For any quantum state $\rab\in \cD(\cH_A\otimes\cH_B)$, we define the channel discrimination power (CDP) of $\rab$ on $A$ as
\begin{equation}\label{cdp}
\CDP_A(\rab)
:=
\inf_{\Lam_0,\Lam_1}
\frac{
		D(\Lam_{0,A}[\rho_{AB}],
			\Lam_{1,A}[\rho_{AB}])
	}
	{
		D_\diamond(\Lam_0,\Lam_1).
}
\end{equation}
The infimum is taken over all pairs $\Lambda_0,\Lambda_1$ of quantum channels with input in $\cL(\cH_A)$, and we have used the short-hand notation $\Lam_{0,A}:=\Lam_{0,A}\otimes \id_B$. We similarly define $\CDP_B(\rab)$. The parameter $\CDP_A(\rab)$ captures how suitable $\rab$ is for ancilla-assisted channel discrimination, comparing how well $\rab$ allows us to discriminate two channels acting on $A$ with respect to the optimal distinguishability of those two channels, in a worst-case scenario approach. Notice that we must necessarily  take into account the actual distiguishability of the two channels $\Lambda_0$ and $\Lambda_1$; a minimization of the numerator alone in Eq.~\eqref{cdp} would trivially vanish. In principle one could consider another measure of distinguishability of the two channels to be used as denominator, 
like $D(\Lambda_0,\Lambda_1)$, but the diamond distance is a choice that is mathematically convenient and conceptually meaningful, as it  regards the usefulness of an ancilla in the discrimination. We do not know whether the infimum in Eq.~\eqref{cdp} can be replaced by a minimum, more specifically, whether the infimum can be achieved with some bounded output dimension for the channels $\Lambda_0,\Lambda_1$. 
In the following we report a number of results about the channel discrimination power~\cite{appendix}.

\emph{Basic properties.} One can easily prove that $\CDP_A(\rho_{AB})$ is continuous in its argument: 
\begin{equation}
\label{eq:cont}
|\CDP_A(\rab)-\CDP(\sigma_{AB})|\leq 2 D(\rab,\sigma_{AB}).
\end{equation}
Furthermore, it is monotonically non-increasing under local operations on the ancilla, that is, $\CDP_A(\rab))\geq \CDP_A(\Gamma_B[\rab])$, for all channels $\Gamma$ on $B$.
Notice that this immediately implies that, for fixed dimension of $A$, the $\CDP$ assumes maximal value for pure states, as any bipartite state $\rho_{AB}$ can be seen as the reduced state of a pure state $\psi_{ABB'}$, with $B'$ a purifying system, and $BB'$ considered together as one ancilla. Furthermore, $\CDP_A(\rho_{AB})$ is invariant under local unitaries on $A$, that is $\CDP_A(\rab))= \CDP_A(U_A\rab U_A^\dagger)$. Together with monotonicity under operations on $B$, this implies that the $\CDP$ of a pure state only depends on its Schmidt coefficients. We find:
\begin{Theorem}
	 Let $\ket{\psi}_{AB}$ be a pure state with Schmidt decomposition as in \eqref{eq:SD}. Then, if $d_{\min} = d_A = d_B$, $\CDP_A(\psi_{AB})=\CDP_B(\psi_{AB})=p_{d_{\min}}$, while, if $d_{\min} = d_A < d_B$, $\CDP_A(\psi_{AB})=p_{d_{\min}}$ and $\CDP_B(\psi_{AB}) = 0$.
\end{Theorem}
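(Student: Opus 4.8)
The plan is to reformulate the CDP as an infimum over the \emph{difference} map $\Delta=\Lambda_0-\Lambda_1$. Since both the trace distance in the numerator of \eqref{cdp} and the diamond distance in the denominator are homogeneous of degree one in $\Delta$, and since every Hermiticity-preserving, trace-annihilating map is a positive multiple of a difference of two channels (Jordan-decompose its Choi operator and pad both parts to be trace-preserving), one may replace the infimum over channel pairs by an infimum over all nonzero $\Delta$ of this type. Writing $J(\Delta)$ for the Choi operator of $\Delta$, the purification identity $\ket{\psi}=(I_A\otimes\sqrt{\rho_B})\ket{\Phi}$ with $\ket{\Phi}=\sum_i\ket{a_i}\ket{b_i}$ gives $(\Delta\otimes\id_B)[\proj{\psi}]=(I\otimes\sqrt{\rho_B})J(\Delta)(I\otimes\sqrt{\rho_B})$, while the standard characterization of the diamond norm over pure inputs yields $\|\Delta\|_\diamond=\max_{S:\Tr(S^\dagger S)=1}\|(I\otimes S)J(\Delta)(I\otimes S^\dagger)\|_1$, where the reference has the same dimension as $A$. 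Hence $\CDP_A(\psi)=\inf_\Delta \frac{\|(I\otimes\sqrt{\rho_B})J(\Delta)(I\otimes\sqrt{\rho_B})\|_1}{\max_S\|(I\otimes S)J(\Delta)(I\otimes S^\dagger)\|_1}$, a clean ratio of weighted trace norms in which $\rho_B=\mathrm{diag}(p_1,\dots,p_{d_{\min}})$.

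For the lower bound I would, assuming $p_{d_{\min}}>0$, write each competitor as $S=M\sqrt{\rho_B}$ with $M=S\rho_B^{-1/2}$, so that $(I\otimes S)J(\Delta)(I\otimes S^\dagger)=(I\otimes M)\,[(I\otimes\sqrt{\rho_B})J(\Delta)(I\otimes\sqrt{\rho_B})]\,(I\otimes M^\dagger)$. The submultiplicativity $\|PXQ\|_1\le\|P\|_\infty\|X\|_1\|Q\|_\infty$ together with $\|M\|_\infty\le\|S\|_\infty\,\|\rho_B^{-1/2}\|_\infty\le\|S\|_2/\sqrt{p_{d_{\min}}}=1/\sqrt{p_{d_{\min}}}$ bounds the denominator by $p_{d_{\min}}^{-1}$ times the numerator, giving ratio $\ge p_{d_{\min}}$ for every $\Delta$; when $p_{d_{\min}}=0$ the bound is vacuous. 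For the matching upper bound I would exhibit an explicit $\Delta$: take its Choi operator to be $J=\Xi\otimes\proj{b_{d_{\min}}}$, with $\Xi$ any nonzero traceless Hermitian operator on a qubit output (traceless so $\Delta$ is trace-annihilating). A direct computation gives numerator $=p_{d_{\min}}\|\Xi\|_1$, since $\sqrt{\rho_B}\proj{b_{d_{\min}}}\sqrt{\rho_B}=p_{d_{\min}}\proj{b_{d_{\min}}}$, and $\|\Delta\|_\diamond=\|\Xi\|_1\max_\omega\langle b_{d_{\min}}|\omega|b_{d_{\min}}\rangle=\|\Xi\|_1$, so the ratio equals $p_{d_{\min}}$ exactly. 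Together these give $\CDP_A(\psi)=p_{d_{\min}}$, and the same argument with $A$ and $B$ interchanged gives $\CDP_B(\psi)=p_{d_{\min}}$ when $d_A=d_B$.

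The case $d_A<d_B$ needs two adjustments. For $\CDP_A$, the reference $B$ is larger than the copy of $A$ required by the diamond norm, but since $\ket{\psi}$ lives in $\cH_A\otimes B_0$ with $B_0=\mathrm{span}\{b_1,\dots,b_{d_A}\}$ of dimension $d_A$, both numerator and diamond norm only involve $B_0$; restricting to $B_0$ reduces the problem to the previous balanced argument, now with $\rho_B$ carried by a $d_A$-dimensional space whose smallest eigenvalue is $p_{d_{\min}}=p_{d_A}$, and the computation goes through verbatim. For $\CDP_B$ the channels act on the larger system $B$, whose marginal $\rho_B$ has rank $\SR(\psi)\le d_A<d_B$ and is therefore not full rank; choosing $\Lambda_0,\Lambda_1$ that coincide on the support of $\rho_B$ but differ on its orthogonal complement makes $(\id_A\otimes\Delta_B)[\proj{\psi}]=0$ while $D_\diamond(\Lambda_0,\Lambda_1)>0$, so the ratio is $0$ and the infimum is $0$.

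I expect the main obstacle to be the first paragraph: pinning down the correct reformulation, in particular verifying that the admissible $\Delta$ are exactly the positive multiples of differences of channels and that the diamond norm is recovered by the maximization over $S$ with $\Tr(S^\dagger S)=1$ (including that the optimal input has the same dimension as $A$). Once this dictionary is in place, the two bounds are, respectively, a short operator-norm estimate and a one-line explicit construction; the rank-deficient sub-cases ($\SR(\psi)<d_{\min}$, and $\CDP_B$ when $d_A<d_B$) then reduce to checking that the target value $p_{d_{\min}}$ is $0$ and that both sides degenerate accordingly.
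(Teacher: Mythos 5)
Your proposal is correct and follows essentially the same route as the paper: the lower bound is the paper's Lemma on $\|\Delta\|_\diamond \le p_{d}^{-1}\|\Delta\otimes\mathrm{id}[\proj{\psi}]\|_1$ (writing the optimal diamond-norm input as $(\mathds{1}\otimes DC^{-1})\ket{\psi}$ and applying H\"older twice, which is exactly your $S=M\sqrt{\rho_B}$ step), and your Choi operator $\Xi\otimes\proj{b_{d_{\min}}}$ with $\Xi=\proj{0}-\proj{1}$ is precisely the difference of the paper's explicit pair of perfectly distinguishable channels. The only difference is cosmetic: you phrase everything in the Choi-operator dictionary, whereas the paper works directly with the maps.
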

Notice that it might be that $p_{d_{\min}}=0$, in which case both $\CDP_A(\psi_{AB})$ and $\CDP_B(\psi_{AB})$ vanish. We remark that $p_{d_{\min}}$ is a quantifier of the entanglement of $\ket{\psi}_{AB}$. Having already established that $\CDP_A$ is maximal for pure states, we find that it achieves its maximum, $1/d_A$, for maximally entangled states, e.g., for $\ket{\psi^+}_{AB}=\frac{1}{\sqrt{d_A}}\sum_{i=1}^{d_A} \ket{i}_A\ket{i}_B$.

We remark that it should not be surprising that the maximum of the channel discrimination power, being defined as in Eq.~\eqref{cdp}, decreases with $d_A$, since the number of parameters describing an arbitrary channel with input in $A$ increases with the size of $A$.

\emph{General bounds for mixed states.}
We now present general bounds for the CDP.

\begin{Theorem}\label{16}
	Let $\rab$ have an OSD as in Eq.~\eqref{eq:OSD}, 
	with $\{A_i\}$, $\{B_i\}$ Hermitian orthonormal bases for $\cL(\cH_A)$ and $\cL(\cH_B)$, respectively. Then 
	\begin{equation}
	\label{eq:CDPbounds}
	\frac{r_{d_A^2}}{d^{5/2}_A} \leq \CDP_A(\rab)\leq \min_i\left\{r_i\frac{\bnorm{B_i}{1}}{\bnorm{A_i}{\infty}}\right\} \leq r_{d_A^2}\sqrt{d_Ad_B}.
	\end{equation}
\end{Theorem}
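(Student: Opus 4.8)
The plan is first to recast the $\CDP$ in a form adapted to norm estimates. Writing $\Delta\deff\Lam_0-\Lam_1$ for the difference of the two channels, the numerator of Eq.~\eqref{cdp} becomes $\tfrac12\bnorm{(\Delta\otimes\id_B)[\rab]}{1}$ and the denominator $\tfrac12\bnorm{\Delta}{\diamond}$, with $\bnorm{\Delta}{\diamond}$ the diamond (completely bounded trace) norm. As the ratio is invariant under rescaling $\Delta$, and every Hermiticity-preserving, trace-annihilating map arises as $\Lam_0-\Lam_1$ up to a positive factor (take $\Lam_1$ the completely depolarizing channel, whose Choi operator is strictly positive, and $\Lam_0=\Lam_1+\eps\Delta$ for small $\eps$), I would rewrite
\[
\CDP_A(\rab)=\inf_{\Delta\neq0}\frac{\bnorm{(\Delta\otimes\id_B)[\rab]}{1}}{\bnorm{\Delta}{\diamond}},
\]
the infimum running over all nonzero Hermiticity-preserving, trace-annihilating $\Delta$ of arbitrary output dimension.

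For the upper bound I would exhibit good test maps. For each $i$, set $\Delta^{(i)}[X]\deff\llangle A_i|X\rrangle M=\Tr(A_iX)\,M$ with $M$ any nonzero traceless Hermitian output operator; this $\Delta^{(i)}$ is Hermiticity-preserving and trace-annihilating. Using $\llangle A_i|A_j\rrangle=\delta_{ij}$ gives $(\Delta^{(i)}\otimes\id_B)[\rab]=r_iM\otimes B_i$, so the numerator equals $r_i\bnorm{M}{1}\bnorm{B_i}{1}$. A short calculation shows that a superoperator $X\mapsto\Tr(KX)L$ with $K$ Hermitian has diamond norm $\bnorm{L}{1}\bnorm{K}{\infty}$ (the optimal reference state concentrates on the top eigenvector of $K$); hence $\bnorm{\Delta^{(i)}}{\diamond}=\bnorm{M}{1}\bnorm{A_i}{\infty}$. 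The factor $\bnorm{M}{1}$ cancels, yielding the bound $r_i\bnorm{B_i}{1}/\bnorm{A_i}{\infty}$ for each $i$, and minimizing over $i$ gives the middle inequality. The rightmost inequality follows by evaluating the minimum at $i=d_A^2$ (extending $\{A_i\},\{B_i\}$ to full Hermitian bases, with $r_i=0$ for $i>\OSR$) and using $\bnorm{B_{d_A^2}}{1}\leq\sqrt{d_B}\,\bnorm{B_{d_A^2}}{2}=\sqrt{d_B}$ and $\bnorm{A_{d_A^2}}{\infty}\geq\bnorm{A_{d_A^2}}{2}/\sqrt{d_A}=1/\sqrt{d_A}$.

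The lower bound is the substantive part. The obstacle is that the output dimension of $\Delta$ is unbounded, so crude conversions (e.g.\ replacing the trace norm of the numerator by its Hilbert--Schmidt norm) lose a factor $\sqrt{\mathrm{rank}}$ that cannot be controlled. My plan is to process the ancilla with one fixed contraction and reduce to the maximally entangled reference. Assuming $\OSR(\rab)=d_A^2$ (else $r_{d_A^2}=0$ and the bound is trivial), so that $\{A_i\}_{i=1}^{d_A^2}$ is a complete orthonormal basis, I define on the ancilla the Hermiticity-preserving map $\cR[X]\deff\gamma\sum_i r_i^{-1}\Tr(B_iX)\,\overline{A_i}$ with $\gamma\deff r_{d_A^2}/d_A^{5/2}$. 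By the triangle inequality together with the single-term diamond norm above, $\bnorm{\cR}{\diamond}\leq\gamma\sum_i r_i^{-1}\bnorm{A_i}{1}\bnorm{B_i}{\infty}\leq\gamma\,r_{d_A^2}^{-1}\,d_A^2\sqrt{d_A}=1$, using $\bnorm{A_i}{1}\leq\sqrt{d_A}$, $\bnorm{B_i}{\infty}\leq1$, and $r_i\geq r_{d_A^2}$. Since $\bnorm{\cR}{\diamond}\leq1$, the map $\id\otimes\cR$ cannot increase the trace norm, so
\[
\bnorm{(\Delta\otimes\id_B)[\rab]}{1}\geq\bnorm{(\Delta\otimes\cR)[\rab]}{1}.
\]

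To conclude, note that $\cR[B_i]=(\gamma/r_i)\overline{A_i}$ by construction, hence $(\Delta\otimes\cR)[\rab]=\gamma\sum_i\Delta[A_i]\otimes\overline{A_i}=\gamma\,J(\Delta)$, where $J(\Delta)=d_A(\Delta\otimes\id)[\mes]$ is the unnormalized Choi operator, using the completeness identity $\sum_iA_i\otimes\overline{A_i}=d_A\mes$. Finally the estimate $\bnorm{\Delta}{\diamond}\leq\bnorm{J(\Delta)}{1}$ — which I would obtain by expressing a reference pure state through its coefficient matrix $C$, $\bnorm{C}{2}=1$, so that $(\Delta\otimes\id)[\proj{\psi}]=(\I\otimes C^T)J(\Delta)(\I\otimes\overline{C})$, and applying $\bnorm{XYZ}{1}\leq\bnorm{X}{\infty}\bnorm{Y}{1}\bnorm{Z}{\infty}$ with $\bnorm{C}{\infty}\leq\bnorm{C}{2}=1$ — gives $\bnorm{(\Delta\otimes\cR)[\rab]}{1}=\gamma\bnorm{J(\Delta)}{1}\geq\gamma\bnorm{\Delta}{\diamond}$, and therefore $\CDP_A(\rab)\geq\gamma=r_{d_A^2}/d_A^{5/2}$. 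The device that defeats the unbounded-output-dimension obstacle is that all dimension dependence sits inside $\Delta$ and is left untouched by $\cR$, whose diamond norm depends only on the fixed local data $\{r_i,A_i,B_i\}$.
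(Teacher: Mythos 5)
Your proof is correct. For the upper bound you have essentially reproduced the paper's argument: the same rank-one test maps $X\mapsto\Tr(A_iX)M$ (which the paper realizes explicitly as a difference of two channels riding on the completely depolarizing channel, exactly as in your reformulation of the infimum over $\Delta$), and the same key computation $\max_{\bnorm{C}{2}=1}\bnorm{CXC^\dagger}{1}=\bnorm{X}{\infty}$ giving the diamond norm $\bnorm{M}{1}\bnorm{A_i}{\infty}$. The lower bound is where you genuinely diverge. The paper works dually: it lower-bounds the numerator by $r_{d_A^2}\max_i\bnorm{\Delta[A_i]}{1}$ using product witnesses $M_A\otimes B_k/\bnorm{B_k}{\infty}$ in the variational characterization of the trace norm, and separately upper-bounds $\bnorm{\Delta}{\diamond}\leq d_A\bnorm{\doi[\proj{\psi^+}]}{1}\leq d_A^{5/2}\max_i\bnorm{\Delta[A_i]}{1}$ by the triangle inequality applied to the operator Schmidt decomposition of $\proj{\psi^+}$, the two estimates meeting at the intermediate quantity $\max_i\bnorm{\Delta[A_i]}{1}$. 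You instead work primally, building a single ancilla-side map $\cR$ with $\bnorm{\cR}{\diamond}\leq 1$ that converts $\rab$ into $\gamma d_A\proj{\psi^+}$, so that the numerator dominates $\gamma d_A\bnorm{\doi[\proj{\psi^+}]}{1}\geq\gamma\bnorm{\Delta}{\diamond}$. This buys a cleaner conceptual statement --- any state of maximal $\OSR$ is locally convertible, by a diamond-norm contraction acting on the ancilla alone, into a multiple of the maximally entangled state, and inherits its discrimination power at the corresponding rate --- at the price of invoking the subadditivity of the diamond norm and the completeness identity $\sum_iA_i\otimes\overline{A_i}=d_A\proj{\psi^+}$ (the paper also uses the latter, explicitly). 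The dimensional losses are identical in the two arguments ($\bnorm{A_i}{1}\leq\sqrt{d_A}$ together with the $d_A^2$ terms of the sum), so neither route is sharper, and both correctly defeat the unbounded output dimension of $\Delta$ by routing all estimates through the diamond norm of a fixed local object.
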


The last inequality comes from standard dimension-dependent relations between $p$-norms~\cite{horn2012matrix}. These bounds are not tight in general, as proven by the results about pure states. Nonetheless, they capture quantitatively, rather than purely qualitatively, the fact that the necessary and sufficient condition for $\rab$ to always enable ancilla-assisted discrimination and tomography of an arbitrary channel with input in $\cL(\cH_A)$ is that $\OSR(\rho)=d_A^2$.

\emph{Bound for separable states.} 
We recall that mixed unentangled states may have maximal OSR, that is $\OSR(\rho_{AB})=d_A^2$, so that, according to Eq.~\eqref{eq:CDPbounds}, they have non-zero CDP. This is the case, for example, of isotropic states, considered more in detail below.

We now focus on the case $d_A=d_B=d$. As we have seen, $\CDP$ can be as high as $1/d$. We prove that such a value cannot be achieved by states passing the realignment criterion for separability, i.e., such that its OSCs satisfy $\sum_ir_i\leq 1$. The proof makes use of the following bound, which characterizes the total correlations present in a state, and may be of independent interest.
\begin{Lemma}
For any $\rho_{AB}$ and any product state $\sigma_A\otimes \sigma_B$, one has
$\sum_{i\geq2}r^2_i(\rho_{AB}) = \Tr(\rho^2)- r_1^2 \leq \|\rho_{AB} - \sigma_A\otimes \sigma_B\|^2_2$.
\end{Lemma}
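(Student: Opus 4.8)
The first equality is immediate: the excerpt records that the OSCs satisfy $\sum_i r_i^2 = \llangle \rho | \rho \rrangle = \Tr(\rho^2)$, so subtracting the leading term gives $\sum_{i\geq 2} r_i^2 = \Tr(\rho^2) - r_1^2$. The content is therefore the inequality, and the plan is to obtain it by expanding the squared $2$-norm and controlling the overlap term by a variational characterization of the top OSC $r_1$.

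Since all operators involved are Hermitian, I would expand
\begin{equation}
\|\rho_{AB} - \sigma_A\otimes\sigma_B\|_2^2 = \Tr(\rho^2) - 2\,\llangle \sigma_A\otimes\sigma_B \,|\, \rho_{AB}\rrangle + \|\sigma_A\|_2^2\,\|\sigma_B\|_2^2,
\end{equation}
using $\|\sigma_A\otimes\sigma_B\|_2^2 = \Tr(\sigma_A^2)\Tr(\sigma_B^2) = \|\sigma_A\|_2^2\|\sigma_B\|_2^2$. Writing $m \deff \|\sigma_A\|_2\|\sigma_B\|_2 \geq 0$ and $X\deff\sigma_A/\|\sigma_A\|_2$, $Y\deff\sigma_B/\|\sigma_B\|_2$ (normalized in the Hilbert-Schmidt inner product), the overlap becomes $\llangle \sigma_A\otimes\sigma_B\,|\,\rho_{AB}\rrangle = m\,\llangle X\otimes Y\,|\,\rho_{AB}\rrangle$.

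The key step is that $r_1$ is exactly the maximal overlap of $\rho_{AB}$, viewed as a vector in $\cL(\cH_A)\otimes\cL(\cH_B)$, with any normalized product operator: $r_1 = \max_{\|X\|_2 = \|Y\|_2 = 1} |\llangle X\otimes Y\,|\,\rho_{AB}\rrangle|$. This is simply the variational characterization of the largest singular value of the correlation matrix $C_{ij}(\rho_{AB})$ defined in the excerpt, since expanding $X$ and $Y$ in the bases $\{F_i\}$ and $\{G_j\}$ turns $\llangle X\otimes Y\,|\,\rho_{AB}\rrangle$ into a bilinear form $u^\dagger C v$ with $u,v$ unit vectors, whose maximum modulus over unit $u,v$ is $\sigma_{\max}(C)=r_1$. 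Granting this, $\llangle X\otimes Y\,|\,\rho_{AB}\rrangle \leq r_1$, so (as $m\geq 0$)
\begin{equation}
\|\rho_{AB}-\sigma_A\otimes\sigma_B\|_2^2 \geq \Tr(\rho^2) - 2 m r_1 + m^2 = \Tr(\rho^2) - r_1^2 + (m - r_1)^2 \geq \Tr(\rho^2) - r_1^2,
\end{equation}
which is the claimed bound.

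Conceptually this is the Eckart--Young statement that the best rank-one approximation of $\rho_{AB}$ in the $2$-norm is its leading OSD term $r_1 A_1 \otimes B_1$, with squared residual $\sum_{i\geq2} r_i^2$; restricting the approximant to be a genuine product of states can only enlarge the residual. The only point requiring care --- and the main obstacle if one insists on a fully self-contained argument --- is the variational/SVD characterization of $r_1$, but this is a standard property of singular values, so no genuine difficulty arises. Note that the argument never uses positivity or unit trace of $\sigma_A,\sigma_B$, consistent with the bound holding for arbitrary product operators.
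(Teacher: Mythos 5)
Your proof is correct, but it takes a genuinely different route from the paper's. The paper treats the operator Schmidt coefficients as the singular values of the correlation matrix $C(\rho_{AB})$ and invokes the singular-value perturbation inequality $\sum_i(\sigma_i(M)-\sigma_i(N))^2\leq\|M-N\|_2^2$ (Corollary 7.3.5 in Horn--Johnson), applied with $N=C(\sigma_A\otimes\sigma_B)$, which has only one nonzero singular value; dropping the $i=1$ term then gives the claim. You instead expand $\|\rho_{AB}-\sigma_A\otimes\sigma_B\|_2^2$ directly and control the cross term via the variational characterization $r_1=\max_{\|X\|_2=\|Y\|_2=1}|\llangle X\otimes Y|\rho_{AB}\rrangle|$ of the top singular value of $C(\rho_{AB})$, then complete the square. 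Your argument is more elementary and self-contained: it needs only the standard identity $\sigma_{\max}(C)=\max_{\|u\|=\|v\|=1}|u^\dagger C v|$ rather than the full Mirsky/Hoffman--Wielandt-type perturbation bound, and your normalization step (expanding $X,Y$ in the local operator bases to turn the overlap into a bilinear form $u^\dagger C v$) is carried out correctly. Both routes in fact yield the same slightly stronger intermediate bound $\sum_{i\geq 2}r_i^2 + (r_1 - \|\sigma_A\|_2\|\sigma_B\|_2)^2 \leq \|\rho_{AB}-\sigma_A\otimes\sigma_B\|_2^2$, of which the stated lemma discards the second nonnegative term; and, as you note, neither uses positivity or normalization of $\sigma_A\otimes\sigma_B$. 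What the paper's approach buys is brevity given the cited reference and the prospect of controlling all coefficient differences simultaneously; what yours buys is transparency and independence from that reference.
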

Such lemma allows us to prove the following.
\begin{Theorem}
	If the OSCs of $\rab$ satisfy $\sum_ir_i\leq 1$, then $r_{d}\leq r_{\textup{CN}}$ with
	\[
	r_{\textup{CN}} = \frac{d(d^2-1)-\sqrt{d^2-1}}{d(d^2-1)^2+d^3}<\frac{1}{d^2}.
	\]
\end{Theorem}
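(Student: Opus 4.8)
The plan is to turn the statement into an extremal problem for the operator Schmidt coefficients (OSCs) and solve it. Throughout I take the coefficient in question to be the last (smallest) one, $r_{d^2}=r_{d_A^2}$ (for $d_A=d_B=d$), since it is this coefficient that enters the CDP through the upper bound of Theorem~\ref{16}, so that establishing $r_{d^2}<1/d^2$ forces $\CDP_A(\rab)\le r_{d^2}\sqrt{d_Ad_B}<1/d$. Two inputs drive the argument: the realignment hypothesis $\sum_i r_i\le 1$, and a lower bound $r_1\ge 1/d$ on the leading OSC. The latter I would obtain from the fact that $r_1$ is the largest overlap of $\rab$ with a unit product operator, together with the observation that $\I_A/\sqrt d\otimes\I_B/\sqrt d$ is such an operator with $\llangle \I_A/\sqrt d\otimes\I_B/\sqrt d\,|\,\rab\rrangle=\Tr(\rab)/d=1/d$; equivalently this is the Lemma specialised to the maximally mixed product state $\sigma_A\otimes\sigma_B=\I_A\otimes\I_B/d^2$.

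A qualitative backbone bound is then immediate, because the smallest of the $d^2-1$ trailing coefficients cannot exceed their average:
\[
r_{d^2}\le\frac{1}{d^2-1}\sum_{i\ge 2}r_i\le\frac{1-r_1}{d^2-1}\le\frac{1-1/d}{d^2-1}=\frac{1}{d(d+1)}<\frac{1}{d^2}.
\]
This already shows that the maximal CDP value $1/d$ is out of reach for realignment-passing states. To pin down the precise constant $r_{\textup{CN}}$ I would refine the linear (average) estimate into a quadratic one, bringing in the second moment: using $\sum_i r_i^2=\Tr(\rab^2)$ and the Lemma to control the correlation tail $\sum_{i\ge 2}r_i^2=\Tr(\rab^2)-r_1^2$, I would complement the bound above with the quadratic-mean inequality $r_{d^2}\le\bigl(\sum_{i\ge 2}r_i^2/(d^2-1)\bigr)^{1/2}$, whose prefactor is exactly the $\sqrt{d^2-1}$ appearing in $r_{\textup{CN}}$. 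Playing the linear realignment budget against this quadratic constraint and optimising over the admissible value of $r_1$ is a two-variable program whose stationary solution should return $r_{\textup{CN}}=\bigl(d(d^2-1)-\sqrt{d^2-1}\bigr)/\bigl(d(d^2-1)^2+d^3\bigr)$. The closing inequality $r_{\textup{CN}}<1/d^2$ is a one-line check: after clearing denominators it reduces to $-d\sqrt{d^2-1}<1$, valid for every $d$.

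The main obstacle I anticipate is the step in which positivity of $\rab$ must be converted into a usable scalar constraint on the $\{r_i\}$. The purely linear data ($\sum_i r_i\le1$, $r_1\ge1/d$, and the ordering) yield the backbone bound but cannot by themselves produce the sharper constant, so the real content of the refinement sits in the Lemma and in the choice of comparison product state used to bound $\sum_{i\ge2}r_i^2$. The delicate points are therefore (i) to argue that the optimiser of the combined linear-plus-quadratic program is the configuration that equalises the $d^2-1$ trailing coefficients while holding $r_1$ at its forced value, and (ii) to confirm that this configuration is feasible, i.e. corresponds to an attainable OSC vector of a genuine density operator, rather than being merely a stationary point of the relaxation. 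Checking feasibility and extremality of the equalised configuration is where I would concentrate the effort.
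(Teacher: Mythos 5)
Your backbone argument is sound and is, in essence, the right way to attack this statement: the single fact $r_1\ge 1/d$ (which, as you correctly note, is exactly what the paper's Lemma yields when specialised to $\sigma_A\otimes\sigma_B=\I/d\otimes\I/d$, and which you can also get directly from $\llangle \I/\sqrt{d}\otimes\I/\sqrt{d}\,|\,\rab\rrangle=1/d$), combined with $\sum_i r_i\le 1$ and the averaging bound on the $d^2-1$ trailing coefficients, gives $r_{d^2}\le\frac{1-1/d}{d^2-1}=\frac{1}{d(d+1)}<\frac{1}{d^2}$, which already delivers the physically relevant conclusion $\CDP_A(\rab)\le d\,r_{d^2}<1/d$. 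The gap is entirely in the proposed refinement. The quadratic-mean inequality $r_{d^2}\le\bigl(\sum_{i\ge2}r_i^2/(d^2-1)\bigr)^{1/2}$ points the wrong way: the quadratic mean dominates the arithmetic mean, so this step is strictly weaker than the averaging you already performed and cannot sharpen the constant. Moreover, the Lemma instantiated at the maximally mixed product state gives $\sum_{i\ge2}r_i^2\le\Tr(\rho^2)-1/d^2$, which is literally equivalent to $r_1\ge1/d$ and supplies no independent upper control on $\sum_{i\ge2}r_i^2$; your ``two-variable program'' therefore has no genuine second constraint to play against the realignment budget and will not land on $r_{\textup{CN}}$.

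For comparison, the paper keeps the quadratic information in the form $r_{d^2}^2\le\sum_i r_i^2-1/d^2$ and maximises $r_{d^2}$ subject to this and $\sum_i r_i\le1$, arguing by Schur convexity that the extremal configuration is $r_2=\dots=r_{d^2}=r$, $r_1=1-(d^2-1)r$, and solving the resulting quadratic. However, you should be suspicious of the target constant rather than of your own bound. Your inequality $r_{d^2}\le\frac{1}{d(d+1)}$ is saturated by the isotropic state at the separability threshold $p=1/(d+1)$, whose OSCs $(1/d,p/d,\dots,p/d)$ satisfy $\sum_i r_i=1$ and $r_{d^2}=\frac{1}{d(d+1)}$; for $d=2$ this equals $1/6\approx0.1667$, whereas $r_{\textup{CN}}=(6-\sqrt{3})/26\approx0.1642$, so the printed constant cannot be a valid upper bound at $d=2$. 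Indeed, since the constraint $r_{d^2}^2\le\sum_i r_i^2-1/d^2$ is itself implied by $r_1\ge1/d$ together with the ordering, the optimum of the paper's own programme is necessarily at least $\frac{1}{d(d+1)}$; the correct smaller root of the paper's quadratic is $\frac{d(d^2-1)-\sqrt{d^2-1}}{d(d^4-d^2-1)}$, with denominator $d(d^4-d^2-1)$ rather than $d(d^2-1)^2+d^3=d(d^4-d^2+1)$, and that (weaker) value is what the method actually proves. So do not invest effort in the feasibility and extremality analysis you flag at the end in pursuit of $r_{\textup{CN}}$: your elementary bound is the tight one of this type, and it is the one to keep.
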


By combining this with Theorem~\ref{16} we prove that, if the OSCs of $\rab$ satisfy $\sum_i r_i \leq 1$, then $\CDP(\rab)\leq r_{\textup{CN}} d < 1/d$. We remark that the realignment criterion for separability is satisfied by all separable states, and by many (weakly) entangled states~\cite{rudolph2004computable,chen2003matrix,revent}.

\emph{Relation with discord.}
As we have just seen, entanglement is needed to achieve the maximal possible CDP. Nonetheless, separable states can have non-vanishing CDP, 
when they have maximal OSR. As pointed out in Ref.~\cite{dakic2010necessary}, this is not possible for states that 
do not exhibit quantum discord. A bipartite state is classical on $A$ if it can be expressed as $\rab = \sum_i p_i \proj{a_i}_A\otimes \rho_i^B$, for some orthonormal basis $\{\ket{a_i}\}$, and manifestly has $\OSR\leq d_A$. States that are not classical on $A$ are said to possess quantum discord~\cite{henderson2001classical,ollivier2001quantum,RevModPhys.84.1655} and may be detected as discordant by looking at their OSR~\cite{dakic2010necessary,lanyon2013experimental}. All entangled states necessarily possess discord, but also unentangled states can. Discord plays a basic role in quantum information processing, being linked to the impossibility of local broadcasting of correlations and information~\cite{piani2008no},
 to quantum data hiding~\cite{piani2014quantumness}, to quantum data locking~\cite{boixo2011quantum}, to entanglement distribution~\cite{chuan2012quantum,streltsov2012quantum}, to quantum metrology~\cite{girolami2014quantum}, to quantum cryptography~\cite{pirandola2014quantum}. Here we shed light on the role of discord in the latter. By using the continuity~\eqref{eq:cont}, we find that
\[
\begin{split}
\CDP_A(\rab)
&\leq 
\min_{
	\substack{
		\Lam_A \text{\ s.t.} \\ OSR\left(\Lam_A[\rab]\right)\,<\,
		d_A^2}
	}2D(\rab,\Lam_A[\rab])  \\
&\leq \min_{\Pi_A}2D(\rab,\Pi_A[\rab]).
\end{split}
\]
The first minimization is over channels that reduce the OSR of $\rab$ to less than maximal. The second minimization is over projective measurements $\Pi[L]=\sum_i\proj{a_i}L\proj{a_i}$, for a choice of basis $\{\ket{a_i}\}$ to be optimized over. The quantity on the second line is a known geometric discord quantifier~\cite{luo2008using}. Hence, we have found that the bipartite state $\rab$ must be contain a large amount of discord in order for $\rab$ to be useful in one-shot, worst-case ancilla-assisted channel discrimination. 

\begin{figure}[!t]
	\subfigure[]{
		\includegraphics[width=.5\linewidth]{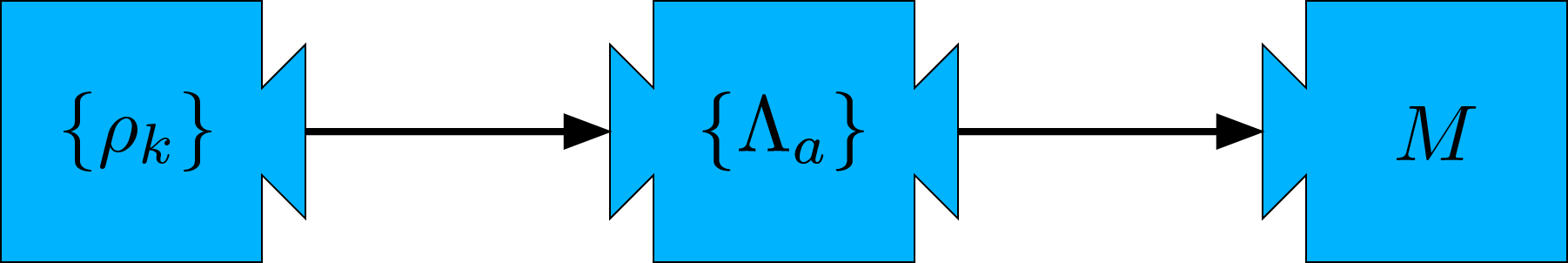}
		\label{fig:NE}
	}
	\vspace{1mm}
	\subfigure[]{
		\includegraphics[width=0.5\linewidth]{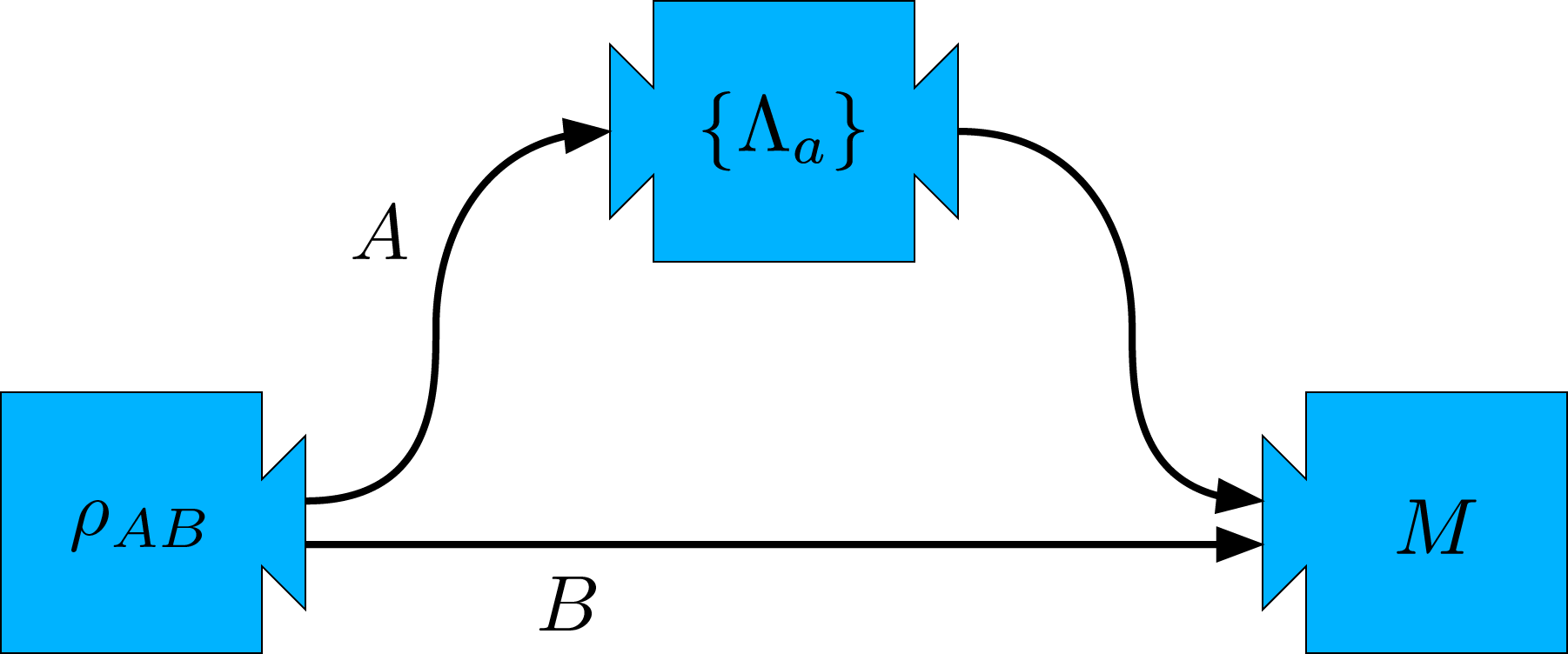}
		\label{fig:diamond}
	}
	\caption{Two strategies for distinguishing channels. (a) No
		ancilla is used: a probe undergoes one of many possible 
		quantum evolutions described by channels $\{\Lambda_a\}$, and is later measured (box $M$). Many different input states $\{\rho_k\}$ are in general needed to achieve the ability to discriminate between arbitrary channels, especially if one cannot tailor the input to the channels. (b) Ancilla-assisted: the probe $B$ is correlated with an ancilla $A$;
		the output probe and the ancilla are jointly measured. Depending on the initial probe-ancilla correlations, it might be possible to distinguish between arbitrary evolutions, without modifying the input.}
	\label{fig:fig}
\end{figure}

\emph{Example.} As an example that goes beyond pure states, we consider the class of isotropic states, i.e., states of the form~\cite{horodecki1999reduction}
\begin{equation}\label{eq:isostates}
\rho_{\text{iso}}(p)
=(1-p)\frac{\openone_{AB}}{d^2}+p\proj{\psi^+}_{AB}.
\end{equation}
This is a paradigmatic class of noisy states that interpolates between an uncorrelated state (for $p=0$) and a maximally entangled state (for $p=1$). It is known that isotropic states are separable for $0\leq p\leq \frac{1}{d+1}$ and entangled for $\frac{1}{d+1}<p\leq 1$. This is also the class of states used in Ref. \cite{altepeter2003ancilla} in the context of ancilla-assisted channel tomography, where it was already observed that this class of states enables channel tomography as soon as $p>0$. Indeed,
one checks easily that isotropic states have the OSD
\begin{equation} 
\rho_{\text{iso}}(p)=\frac{1}{d} \frac{\openone}{\sqrt{d}} \otimes\frac{\openone}{\sqrt{d}}
+
\frac{p}{d}\sum_{k=2}^{d^2} A_k\otimes A^*_k,
\end{equation}
where $\{A_k\}$ is any collection of $d^2-1$ traceless orthonormal operators, and complex conjugation is taken in the local Schmidt basis of the maximally entangled states. Thus, the OSCs of $\rho_{\text{iso}}(p)$ are evidently $(1/d,p/d,\ldots,p/d)$. Notice that $r_{d^2} = p/d$, so that the general bounds \eqref{eq:CDPbounds} become $p/d^{7/2} \leq \CDP(\rho_{\text{iso}}(p))\leq p$; we are able to prove the bounds
\begin{equation}
\frac{p}{d+1-p}
\leq
\CDP(\rho_{\text{iso}}(p))
\leq\min\left\{\frac{2p}{d},\frac{1}{d}\right\},
\end{equation}
which reproduce the correct value for $\CDP$ in the limit $p\rightarrow 1$ in which the isotropic states become maximally entangled.

\emph{Conclusions.} Quantum correlations~\cite{revent,RevModPhys.84.1655} play an important role in several areas of physics, going from quantum foundations, to quantum condensed-matter physics, to quantum information processing and quantum technologies. In particular, quantum correlations can be exploited in quantum metrology~\cite{giovannetti2004quantum,toth2014quantum}.
In this Letter we have focused on the usefulness of quantum correlations for ancilla-assisted channel discrimination with fixed input, introducing a quantifier of such usefulness: the channel discrimination power (CDP) of the state. We have argued that the key relevant parameter that dictates the CDP of a state is its smallest operator Schmidt coefficient. We have proven that the CDP is maximal for maximally entangled states. This can considered an argument to consider the Choi-Jamio{\l}kowski isomorphism~\cite{choi1975completely,jamiolkowski1972linear} as the best possible one-to-one mapping between states and maps. The general bounds for the CDP that we derived allowed us to prove that, while also unentangled states permit ancilla-assisted fixed-input channel discrimination and channel tomography, highly entangled states outperform---in the sense of having a larger CDP---all states that pass the so-called realignment criterion of separability~\cite{chen2003matrix,rudolph2004computable}. We also add to  the list of quantum information processing tasks for which the quantum discord provides a bound on the performance: we proved that a disturbance-based discord quantifier bounds the CDP. Several questions remain open, specifically whether the CDP is actually equal to the lowest operator-Schmidt-coefficient of the state, and which channels are the hardest to discriminate for a state that has a non-zero CDP. Finally, while our work is strictly related to tomography, the CDP is defined in terms of worst-case channel discrimination. It would be interesting to consider more in general how a probe-ancilla state induces a mapping between a metric on the space of channels and a metric in the space of output probe-ancilla states.

\emph{Acknowledgements.} We thank Vern Paulsen for correspondence and John Watrous for discussions. We acknowledge financial support from the European Union's Horizon 2020 Research
and Innovation Programme under the Marie Sk{\l}odowska-Curie
Action OPERACQC (Grant Agreement No. 661338), and from the Foundational
Questions Institute under the Physics of the Observer
Programme (Grant number FQXi-RFP-1601A).

\appendix


\section*{Appendix}

In this appendix we provide the proofs of the claims made in the main text. It will be convenient to work directly with norms, e.g. $\|X\|_1$, rather than with derived distances, e.g., rather than in terms of the trace distance between two states $\rho$ and $\sigma$ defined as 
\[
D(\rho,\sigma):=\frac{1}{2}\|\rho-\sigma\|_1.
\]
It is useful to recall that, for a Hermitian operators $X=X^\dagger$, one has
\[
\|X\|_1 = \max_{-\mathbb{1} \leq M \leq \mathbb{1}}|\Tr(MX)|.
\]

We define the (Hermitian) super-operator 1-norm of an Hermiticity preserving map $\Gamma$ as~\cite{watrous}
\[
\|\Gamma\|_1 = \sup_{X=X^\dagger; \|X\|_1=1}\|\Gamma[X]\|_1.
\]
Notice that this is equivalent to
\[
\|\Gamma\|_1 = \sup_{X=X^\dagger\neq 0}\frac{\|\Gamma[X]\|_1}{\|X\|_1}.
\]
It is immediate to argue by convexity that the best input $X$ can always be taken to be a pure normalized state $\proj{\psi}$~\cite{watrous}.

We define the diamond norm of an Hermiticity preserving map $\Gamma$ as
\[
\|\Gamma\|_\diamond:=\sup_{n} \|\Gamma\otimes \id_{\mathbb{C}^n}\|_1
\]
where the supremum is over the dimension of the ancillary space $\mathbb{C}^n$. It is easily argued that one can choose $n$ to be equal to the input dimension of the map $\Gamma$~\cite{watrous}. 
Notice that by definition we have the following.
 \begin{Proposition}\label{watt}
 	Let $\Gamma$ be any Hermiticity preserving map, and $X_{AB}$ Hermitian. Then
 	\begin{equation*}
 	\bnorm{\Gamma_A\otimes\text{\emph{id}}_B[X_{AB}]}{1}\leq\bnorm{\Gamma}{\diamond}\bnorm{X_{AB}}{1}.
 	\end{equation*}
 \end{Proposition}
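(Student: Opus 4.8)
The plan is to obtain the inequality by simply chaining the two definitions given just above the statement, since the result is essentially built into them. First I would dispose of the trivial case: if $X_{AB}=0$ both sides vanish and there is nothing to prove, so I may assume $X_{AB}$ is a nonzero Hermitian operator.

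The key point is that $\Gamma_A\otimes\id_B$ is itself a Hermiticity-preserving map on $\cL(\cH_A\otimes\cH_B)$, so the definition of the super-operator 1-norm applies to it directly. Writing $d_B$ for the dimension of the ancilla $B$, the definition
\[
\bnorm{\Gamma\otimes\id_B}{1}=\sup_{Y=Y^\dagger\neq 0}\frac{\bnorm{(\Gamma\otimes\id_B)[Y]}{1}}{\bnorm{Y}{1}}
\]
gives, upon choosing $Y=X_{AB}$ (which is admissible precisely because $X_{AB}$ is nonzero and Hermitian),
\[
\bnorm{\Gamma_A\otimes\id_B[X_{AB}]}{1}\leq \bnorm{\Gamma\otimes\id_B}{1}\,\bnorm{X_{AB}}{1}.
\]
It then remains only to bound $\bnorm{\Gamma\otimes\id_B}{1}$ by the diamond norm. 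This is immediate from the definition $\bnorm{\Gamma}{\diamond}=\sup_n\bnorm{\Gamma\otimes\id_{\mathbb{C}^n}}{1}$: the choice $n=d_B$ is one of the terms in the supremum, so $\bnorm{\Gamma\otimes\id_B}{1}\leq\bnorm{\Gamma}{\diamond}$. Combining the two displayed bounds yields the claim.

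There is no genuine obstacle here; the only points requiring (minor) care are that the input restriction $Y=Y^\dagger$ in the super-operator 1-norm is met by the hypothesis that $X_{AB}$ is Hermitian, and that the ancilla space $\cH_B\cong\mathbb{C}^{d_B}$ indeed appears among the spaces $\mathbb{C}^n$ over which the diamond-norm supremum is taken. No further properties of $\Gamma$ (such as complete positivity or trace preservation) are needed, only Hermiticity preservation, which is exactly the hypothesis.
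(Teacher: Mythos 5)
Your argument is correct and is exactly what the paper intends: the paper states this proposition with no proof beyond the remark that it holds ``by definition,'' and your chaining of the super-operator 1-norm definition (applied to $\Gamma\otimes\id_B$ with input $X_{AB}$) with the diamond-norm supremum at $n=d_B$ is precisely that definitional argument, spelled out. Nothing is missing.
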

As notation goes, we will indicate the difference of two channels $\Lambda_0$ and $\Lambda_1$ as $\Delta = \Lambda_0 - \Lambda_1$. The channel discrimination power can then be expressed as
\[
\CDP_A (\rho_{AB})= \inf_{\Delta}\frac{\|\Delta_A\otimes\id_B[\rho_{AB}]\|_1}{\|\Delta\|_\diamond}.
\]

\subsection{Continuity of the channel discrimination power}

\begin{Proposition}\label{continuity}
$\CDP_{A}(\rho)$ is continuous:
\begin{equation*}
|\CDP_A(\rab)-\CDP_A(\sab)|\leq\bnorm{\rab-\sab}{1},
\end{equation*}
for any two states $\rho_{AB}$ and $\sigma_{AB}$.
\end{Proposition}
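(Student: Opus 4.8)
The plan is to exploit the variational (infimum) form of the CDP recalled just above the statement, namely
\[
\CDP_A(\rab) = \inf_{\Delta\neq 0}\frac{\bnorm{\Delta_A\otimes\id_B[\rab]}{1}}{\bnorm{\Delta}{\diamond}},
\]
where the infimum runs over all nonzero differences $\Delta=\Lam_0-\Lam_1$ of two channels. The key structural observation is that continuity of such an infimum follows from a Lipschitz estimate that holds \emph{uniformly} in $\Delta$ for the individual ratios. Concretely, writing $f_\Delta(\rab):=\bnorm{\Delta_A\otimes\id_B[\rab]}{1}/\bnorm{\Delta}{\diamond}$, I would first establish that for every fixed $\Delta$ one has $|f_\Delta(\rab)-f_\Delta(\sab)|\leq \bnorm{\rab-\sab}{1}$, and then show that this uniform bound descends to the infimum.

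For the per-$\Delta$ bound, I would use the reverse triangle inequality for the $1$-norm together with the linearity of $\Delta_A\otimes\id_B$:
\[
\big|\bnorm{\Delta_A\otimes\id_B[\rab]}{1}-\bnorm{\Delta_A\otimes\id_B[\sab]}{1}\big|\leq \bnorm{\Delta_A\otimes\id_B[\rab-\sab]}{1}.
\]
Since $\rab-\sab$ is Hermitian (being a difference of density operators), Proposition~\ref{watt} applies and yields $\bnorm{\Delta_A\otimes\id_B[\rab-\sab]}{1}\leq \bnorm{\Delta}{\diamond}\,\bnorm{\rab-\sab}{1}$. Dividing by $\bnorm{\Delta}{\diamond}>0$ gives exactly $|f_\Delta(\rab)-f_\Delta(\sab)|\leq\bnorm{\rab-\sab}{1}$, with a constant independent of $\Delta$.

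Finally, I would transfer the bound to the infimum by the standard argument: from $f_\Delta(\rab)\leq f_\Delta(\sab)+\bnorm{\rab-\sab}{1}$ one gets $\CDP_A(\rab)\leq f_\Delta(\sab)+\bnorm{\rab-\sab}{1}$ for every admissible $\Delta$, and taking the infimum over $\Delta$ on the right-hand side yields $\CDP_A(\rab)\leq\CDP_A(\sab)+\bnorm{\rab-\sab}{1}$; exchanging the roles of $\rab$ and $\sab$ gives the reverse inequality, hence the claim. I do not expect a serious obstacle: the only points requiring a little care are ensuring $\bnorm{\Delta}{\diamond}>0$ (guaranteed since the admissible $\Delta$ are differences of distinct channels, for which the diamond norm is nonzero) and the elementary fact that a uniform Lipschitz constant passes through the infimum unchanged. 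Everything else is the reverse triangle inequality combined with the already-established Proposition~\ref{watt}.
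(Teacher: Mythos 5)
Your argument is correct and is essentially the paper's own proof: both rest on the triangle inequality for the $1$-norm combined with Proposition~\ref{watt} to get a bound uniform in $\Delta$, which then passes through the infimum. The only cosmetic difference is that you package the per-$\Delta$ estimate symmetrically via the reverse triangle inequality, whereas the paper writes the one-sided inequality and invokes symmetry at the end.
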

\begin{proof}
Because of the triangle inequality and Proposition~\ref{watt}, one has
\begin{align*}
\bnorm{\doi[\rab]}{1}&=\bnorm{\doi[\sab]+\doi[\rab-\sab]}{1}\\
&\leq\bnorm{\doi[\sab]}{1}+\bnorm{\Delta}{\diamond}\bnorm{\rab-\sab}{1},
\end{align*}
that is 
\begin{align*}
\frac{\bnorm{\doi[\rab]}{1}-\bnorm{\doi[\sab]}{1}}{\|\Delta\|_\diamond}\leq\bnorm{\rab-\sab}{1},
\end{align*}
and the thesis follows.

\end{proof}

\subsection{Monotonicity of the channel discrimination power}

\begin{Proposition}\label{monotonicity}
$\CDP_A(\rho)$ is monotone under local channels on $B$:
\begin{equation}\label{ncdp}
\CDP_A(\id_A\otimes\Lambda_B [\rab])\leq \CDP_A(\rab).
\end{equation}
\end{Proposition}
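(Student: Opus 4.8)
The plan is to work directly from the norm form of the definition,
\[
\CDP_A(\rab)=\inf_{\Delta}\frac{\bnorm{\Delta_A\otimes\id_B[\rab]}{1}}{\bnorm{\Delta}{\diamond}},
\]
and to exploit the fact that the denominator $\bnorm{\Delta}{\diamond}$ depends only on the channels $\Lambda_0,\Lambda_1$ and not on the state. First I would write $\sab\deff\id_A\otimes\Lambda_B[\rab]$ and fix an arbitrary difference of channels $\Delta=\Lambda_0-\Lambda_1$ with input in $\cL(\cH_A)$. Since $\Delta_A\otimes\id_B$ acts on the $A$ factor and $\id_A\otimes\Lambda_B$ acts on the $B$ factor, the two maps commute, so that the numerator evaluated at $\sab$ can be rewritten as
\[
\Delta_A\otimes\id_B[\sab]=\id_A\otimes\Lambda_B\big[\Delta_A\otimes\id_B[\rab]\big].
\]

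The key step is then to invoke contractivity of the trace norm under the channel $\id_A\otimes\Lambda_B$. Because $\Lambda_B$ is completely positive and trace preserving, so is $\id_A\otimes\Lambda_B$; and the argument $\Delta_A\otimes\id_B[\rab]$ is Hermitian, being the image of the Hermitian operator $\rab$ under the Hermiticity-preserving map $\Delta_A\otimes\id_B$. For Hermitian inputs, positivity together with trace preservation already suffices to give $\bnorm{\id_A\otimes\Lambda_B[X]}{1}\leq\bnorm{X}{1}$, so that
\[
\bnorm{\Delta_A\otimes\id_B[\sab]}{1}\leq\bnorm{\Delta_A\otimes\id_B[\rab]}{1}.
\]
Dividing both sides by $\bnorm{\Delta}{\diamond}$, which is unchanged, yields a pointwise inequality between the two ratios, valid for every $\Delta$. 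Taking the infimum over $\Delta$ on both sides preserves the inequality and delivers $\CDP_A(\sab)\leq\CDP_A(\rab)$, which is the claim.

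I expect the only genuinely substantive ingredient to be the contractivity of the trace norm under trace-preserving positive maps; it is a standard fact, but it is worth recording that here it is needed only for a Hermitian input, so that mere positivity (rather than complete positivity) of $\id_A\otimes\Lambda_B$ on that input is enough. The remaining steps, commutativity of the two local maps and passing the pointwise bound through the infimum, are routine. In particular, no analogous statement is available for local operations on $A$, since there the map would act on the same system as $\Delta$ and would not commute with it; this is consistent with the fact that monotonicity is claimed only for channels on the ancilla $B$.
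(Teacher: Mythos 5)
Your proof is correct and follows essentially the same route as the paper's: commute $\Delta_A\otimes\id_B$ past $\id_A\otimes\Lambda_B$, apply contractivity of the trace norm under the trace-preserving positive map $\id_A\otimes\Lambda_B$ on the Hermitian operator $\Delta_A\otimes\id_B[\rab]$, and take the infimum over $\Delta$ with the denominator unchanged. The side remarks (that only positivity plus trace preservation is needed on Hermitian inputs, and why the argument fails for operations on $A$) are accurate but not needed.
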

\begin{proof}
This comes directly from the monotonicity of the trace norm of Hermitian operators under channels, i.e. from $\bnorm{\Lambda[X]}{1}\leq\bnorm{X}{1}$. One has
\begin{multline}
\|\Delta_A\otimes\id_B[\id_A\otimes\Lambda_B[\rab]]\|_1\\
\begin{aligned}
&=\bnorm{\id_A\otimes\Lambda_B[\Delta_A\otimes\id_B[\rab]]}{1}\\
&\leq\bnorm{\Delta_A\otimes\id_B[\rab]}{1},\\
\end{aligned}
\end{multline}
for any $\Delta = \Lambda_0 - \Lambda_1$, and the thesis follows.
\end{proof}

\begin{Proposition}
	The channel discrimination power $\CDP_A$ is invariant under local unitaries on $A$.
\end{Proposition}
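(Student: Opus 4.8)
The plan is to absorb the local unitary into the channel difference and exploit two facts: a unitary conjugation preserves the trace norm, and pre-composing a Hermiticity-preserving map with a unitary channel leaves the diamond norm unchanged. I write the unitary conjugation as the channel $\cU_A[X]=U_A X U_A^\dagger$, so that $U_A\rab U_A^\dagger = \cU_A\otimes\id_B[\rab]$, and start from
\[
\CDP_A(\cU_A\otimes\id_B[\rab]) = \inf_\Delta \frac{\bnorm{\Delta_A\otimes\id_B[\cU_A\otimes\id_B[\rab]]}{1}}{\bnorm{\Delta}{\diamond}}.
\]

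First I would handle the numerator. Since $\cU_A$ is itself a channel, the composition $\tilde\Delta := \Delta\circ\cU_A = (\Lam_0\circ\cU_A)-(\Lam_1\circ\cU_A)$ is again the difference of two legitimate channels, and $(\Delta_A\otimes\id_B)\circ(\cU_A\otimes\id_B) = \tilde\Delta_A\otimes\id_B$. Hence the numerator equals $\bnorm{\tilde\Delta_A\otimes\id_B[\rab]}{1}$ exactly.

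The key step is to show $\bnorm{\tilde\Delta}{\diamond}=\bnorm{\Delta}{\diamond}$. In the definition $\bnorm{\tilde\Delta}{\diamond}=\sup_n\sup_{X=X^\dagger,\,\bnorm{X}{1}=1}\bnorm{\tilde\Delta\otimes\id_{\CC^n}[X]}{1}$, I would perform the change of variables $Y=\cU_A\otimes\id_{\CC^n}[X]$. Because $\cU_A\otimes\id_{\CC^n}$ is conjugation by the unitary $U_A\otimes\I$ on the joint space, it is a trace-norm-preserving bijection of the Hermitian operators onto themselves, so $\bnorm{Y}{1}=\bnorm{X}{1}$ and $Y$ ranges over exactly the same set as $X$. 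Since $\tilde\Delta\otimes\id_{\CC^n}[X]=\Delta\otimes\id_{\CC^n}[Y]$, the two suprema coincide, giving $\bnorm{\tilde\Delta}{\diamond}=\bnorm{\Delta}{\diamond}$.

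Finally, the map $\Delta\mapsto\tilde\Delta=\Delta\circ\cU_A$ is a bijection on the set of differences of channels, with inverse $\tilde\Delta\mapsto\tilde\Delta\circ\cU_A^{-1}$, where $\cU_A^{-1}[X]=U_A^\dagger X U_A$ is again a unitary channel. Therefore the infimum over $\Delta$ equals the infimum over $\tilde\Delta$, and combining the numerator and denominator identities yields $\CDP_A(\cU_A\otimes\id_B[\rab])=\CDP_A(\rab)$. I expect the only nontrivial point to be the diamond-norm invariance: one must verify carefully that $\cU_A\otimes\id_{\CC^n}$ is simultaneously norm-preserving and a bijection on Hermitian operators, so that the change of variables in the supremum is legitimate; the remaining manipulations are bookkeeping with compositions of maps.
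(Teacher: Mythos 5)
Your proof is correct and takes essentially the same route as the paper's: absorb the unitary into the channels by precomposition (the paper writes $\Lambda'[\cdot]=\Lambda[U^\dagger\cdot U]$) and invoke the reparametrization freedom in the infimum. The paper's version is terser and leaves the diamond-norm invariance under precomposition with a unitary channel implicit, whereas you verify it explicitly via the change of variables $Y=(U_A\otimes\openone)X(U_A\otimes\openone)^\dagger$, which is indeed the one point that genuinely needs checking.
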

\begin{proof}
	For any map $\Lambda$ on $A$ and any unitary $U$ on $A$ we can consider the map $\Lambda'[\cdot]=\Lambda[U^\dagger \cdot U]$ such that $(\Lambda_A\otimes\id_B)[\rho_{AB}]=(\Lambda'_A\otimes\id_B)[U_{A}\rho_{AB}U_A^\dagger]$. Given the freedom in the minimization through which $\CDP_A$ is defined, the claim follows immediately.
\end{proof}

\subsection{Channel discrimination power of pure states}

For pure states the CDP can be computed exactly. We will need the following lemma, which is a slight generalization of observations in, e.g., Ref.~\cite{bph}.

\begin{Lemma}\label{brandao2}
Let $\ket{\psi}_{AA'}=\sum_{k=1}^{d}\sqrt{p_k}\ket{a_k}_{A}\otimes\ket{b_k}_{A'}$ be a pure state with $d=d_A=d_{A'}$, and the Schmidt coefficients ordered as $p_1\geq p_2\geq\ldots\geq p_d$. Then
\begin{equation*}
p_d\bnorm{\Delta}{\diamond}\leq \bnorm{\Delta\otimes\text{\emph{id}}[\proj{\psi}]}{1}.
\end{equation*}
\end{Lemma}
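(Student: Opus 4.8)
The plan is to exploit the fact that, since $\ket{\psi}_{AA'}$ has full Schmidt rank $d$, every pure state $\ket{\phi}_{AA'}$ can be produced from $\ket{\psi}_{AA'}$ by a filtering operation acting only on the reference system $A'$, with a controlled operator norm. Note first that we may assume $p_d>0$: if $p_d=0$ then the left-hand side vanishes and the inequality is trivial. Assuming $p_d>0$, all Schmidt coefficients are strictly positive, which is exactly what makes the filtering representation available for an arbitrary target $\ket{\phi}$.

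First I would invoke the preliminaries on the diamond norm recalled above: $\bnorm{\Delta}{\diamond}$ is attained on a pure input $\ket{\phi}_{AA'}$ with an ancilla $A'$ whose dimension equals the input dimension $d$ of $\Delta$ (the supremum over the unit sphere is a maximum by compactness and continuity of the trace norm). In particular this ancilla can be taken to be a copy of $A$, matching the reference system of $\ket{\psi}$. It then suffices to bound $\bnorm{\Delta\otimes\id[\proj{\phi}]}{1}$ for this optimal $\ket{\phi}$ by a multiple of $\bnorm{\Delta\otimes\id[\proj{\psi}]}{1}$.

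Next, expanding $\ket{\phi}=\sum_{k,l}c_{kl}\ket{a_k}_A\otimes\ket{b_l}_{A'}$ in the Schmidt bases of $\ket{\psi}$, I would introduce the operator $M=\sum_{k,l}\frac{c_{kl}}{\sqrt{p_k}}\projj{b_l}{b_k}$ on $A'$; a direct computation gives $(\openone_A\otimes M)\ket{\psi}=\ket{\phi}$. The crucial estimate is $\bnorm{M}{\infty}\leq 1/\sqrt{p_d}$. To obtain it, write $M=C\,P^{-1/2}$ with $C=\sum_{k,l}c_{kl}\projj{b_l}{b_k}$ and $P=\sum_k p_k\proj{b_k}=\Tr_A\proj{\psi}$ the reduced state on $A'$; submultiplicativity of the operator norm then yields $\bnorm{M}{\infty}\leq\bnorm{C}{\infty}\,\bnorm{P^{-1/2}}{\infty}\leq\bnorm{C}{2}\,p_d^{-1/2}$, where $\bnorm{C}{\infty}\leq\bnorm{C}{2}$ and $\bnorm{C}{2}^2=\sum_{k,l}|c_{kl}|^2=\braket{\phi}{\phi}=1$, while $\bnorm{P^{-1/2}}{\infty}=\max_k p_k^{-1/2}=p_d^{-1/2}$.

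Finally, since $\Delta$ acts on $A$ while $M$ acts on the independent factor $A'$, the two factor through each other, giving the identity $\Delta\otimes\id[\proj{\phi}]=(\openone_A\otimes M)\,\big(\Delta\otimes\id[\proj{\psi}]\big)\,(\openone_A\otimes M^\dagger)$. Applying the Hölder-type bound $\bnorm{XYZ}{1}\leq\bnorm{X}{\infty}\bnorm{Y}{1}\bnorm{Z}{\infty}$ together with $\bnorm{\openone_A\otimes M}{\infty}=\bnorm{M}{\infty}$, I conclude that $\bnorm{\Delta}{\diamond}=\bnorm{\Delta\otimes\id[\proj{\phi}]}{1}\leq\bnorm{M}{\infty}^2\,\bnorm{\Delta\otimes\id[\proj{\psi}]}{1}\leq p_d^{-1}\,\bnorm{\Delta\otimes\id[\proj{\psi}]}{1}$, which rearranges to the claim $p_d\bnorm{\Delta}{\diamond}\leq\bnorm{\Delta\otimes\id[\proj{\psi}]}{1}$. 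I expect the main obstacle to be the operator-norm control of the filtering operator $M$, i.e.\ establishing $\bnorm{M}{\infty}\leq 1/\sqrt{p_d}$ cleanly; everything else is bookkeeping with the operator--vector correspondence and standard norm inequalities.
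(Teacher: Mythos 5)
Your proof is correct and follows essentially the same route as the paper's: the paper writes $\ket{\phi}=(\openone\otimes DC^{-1})\ket{\psi}$ via the operator--vector correspondence and bounds $\|DC^{-1}\|_\infty^2\leq\|D\|_\infty^2\|C^{-1}\|_\infty^2\leq p_d^{-1}$, which is exactly your filtering operator $M=CP^{-1/2}$ and the estimate $\|M\|_\infty\leq\|C\|_2\,\|P^{-1/2}\|_\infty\leq p_d^{-1/2}$ written out in the Schmidt bases, followed by the same H\"older step. No gaps.
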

\begin{proof}
We use the fact that any pure state $\ket\psi_{AA'}$ can be expressed as
\begin{align}\label{14}
\ket\psi_{AA'}=(\mathds{1}\otimes C)\ket{\tilde{\psi}^+}_{AA'},
\end{align} 
with $\ket{\tilde{\psi}^+}_{AA'} = \sum_{k=1}^{d}\ket{k}_A\otimes\ket{k}_{A'}$, and $C=\sum_{l=1}^d \sqrt{p_l} \ket{b_l}\bra{a_l^*}$, where $\ket{a_l^*}$ is the basis state whose coefficients in the basis $\ket{k}$ are the complex conjugates of those of $\ket{a_l}$. Notice that the singular values of $C$ coincide with the Schmidt coefficients of $\ket{\psi}$, and the fact that $\ket{\psi}$ is normalized implies $\|C\|_2 = 1$, hence $\|C\|_\infty \leq 1$. 

The claim is trivial if $p_d=0$. If $p_d>0$, then $C$ is invertible, and we can express any other state $\ket{\phi}_{AA'}=(\mathds{1}\otimes D)\ket{\tilde{\psi}^+}_{AA'}$ as
\[
\ket{\phi}_{AA'} = (\mathds{1}\otimes DC^{-1})\ket{\psi}_{AA'} 
\]
Let $\ket{\phi}_{AA'}$ be the state that achieves the diamond norm $\|\Delta\|_\diamond$, that is $\|\Delta\|_\diamond = \|\Delta_A\otimes \id_{A'}[\proj{\phi}_{AA'}] \|_1$. Then
\begin{multline*}
\bnorm{\Delta}{\diamond} \\
\begin{aligned}
& = \|\Delta_A\otimes \id_{A'}[\proj{\phi}_{AA'}] \|_1\\
& = \|(\mathds{1}\otimes DC^{-1})\big(\Delta_A\otimes \id_{A'}[\proj{\psi}_{AA'}]\big)(\mathds{1}\otimes DC^{-1})^\dagger\|_1\\
& \leq \|\openone \otimes DC^{-1} \|_\infty^2\|\Delta_A\otimes \id_{A'}[\proj{\psi}_{AA'}]\|_1\\
& \leq \|D\|_\infty^2 \|C^{-1} \|_\infty^2\|\Delta_A\otimes \id_{A'}[\proj{\psi}_{AA'}]\|_1\\
&= p_d^{-1}\|\Delta_A\otimes \id_{A'}[\proj{\psi}_{AA'}]\|_1
\end{aligned}
\end{multline*}
where in the first inequality we have used the H\"older's inequality, $|\Tr(XY)|\leq \|X\|_\infty \|Y\|_1$, twice. For the last line, just observe that the largest singular value of $C^{-1}$ is the reciprocal of the smallest singular value of $C$.
\end{proof}

\begin{Theorem}\label{cdppure}
Let $\ket{\psi}=\sum_{k}\sqrt{p_k}\ket{a_k}\otimes \ket{b_k}$ be a bipartite state vector in its Schmidt decomposition. Then 
\begin{equation}\label{lbl}
\CDP_A(\proj{\psi})=p_{d_A}.
\end{equation}
\end{Theorem}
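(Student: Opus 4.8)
The plan is to establish the two inequalities $\CDP_A(\proj{\psi})\geq p_{d_A}$ and $\CDP_A(\proj{\psi})\leq p_{d_A}$ separately, working throughout in the Schmidt basis of \eqref{eq:SD}. Note that $p_{d_A}=0$ whenever the Schmidt rank is smaller than $d_A$ (in particular whenever $d_A>d_B$): in that case the lower bound is vacuous and the explicit construction used for the upper bound will automatically force $\CDP_A=0$, so the two bounds together cover every case. The only substantive lower bound is thus the regime $p_{d_A}>0$, i.e.\ full Schmidt rank $d_A\leq d_B$.

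For the lower bound I would reduce to the balanced situation of Lemma~\ref{brandao2}. Assuming $p_{d_A}>0$, write $\ket{\psi}_{AB}=(\openone_A\otimes V)\ket{\psi'}_{AA'}$, where $A'$ is a $d_A$-dimensional ancilla, $\ket{\psi'}_{AA'}=\sum_{k=1}^{d_A}\sqrt{p_k}\ket{a_k}_A\otimes\ket{k}_{A'}$ carries the same Schmidt coefficients, and $V\colon\cH_{A'}\to\cH_B$ is the isometry $V\ket{k}=\ket{b_k}$. Since conjugation by the isometry $\openone_A\otimes V$ preserves all singular values, for every $\Delta=\Lambda_0-\Lambda_1$ one has $\bnorm{\Delta_A\otimes\id_B[\proj{\psi}]}{1}=\bnorm{\Delta_A\otimes\id_{A'}[\proj{\psi'}]}{1}$. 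Lemma~\ref{brandao2}, now applicable because $d_{A'}=d_A$, gives $\bnorm{\Delta_A\otimes\id_{A'}[\proj{\psi'}]}{1}\geq p_{d_A}\bnorm{\Delta}{\diamond}$. Dividing by $\bnorm{\Delta}{\diamond}$ and taking the infimum over $\Delta$ yields $\CDP_A(\proj{\psi})\geq p_{d_A}$.

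For the upper bound it suffices to exhibit a single pair of channels whose ratio equals $p_{d_A}$. I would take the measure-and-prepare maps
\begin{align*}
\Lambda_{0}[X]&=\langle a_{d_A}|X|a_{d_A}\rangle\proj{0}+\big(\Tr X-\langle a_{d_A}|X|a_{d_A}\rangle\big)\proj{2},\\
\Lambda_{1}[X]&=\langle a_{d_A}|X|a_{d_A}\rangle\proj{1}+\big(\Tr X-\langle a_{d_A}|X|a_{d_A}\rangle\big)\proj{2},
\end{align*}
with $\ket{0},\ket{1},\ket{2}$ mutually orthogonal output states; each is a POVM-based measure-and-prepare map, hence CPTP. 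The spectator output $\ket{2}$ absorbs the complementary probability identically in both maps, so it cancels and $\Delta[X]=\langle a_{d_A}|X|a_{d_A}\rangle(\proj{0}-\proj{1})$. For any probe-plus-ancilla input $\ket{\phi}$ one then has $\Delta\otimes\id[\proj{\phi}]=(\proj{0}-\proj{1})\otimes\langle a_{d_A}|\proj{\phi}|a_{d_A}\rangle$, whose trace norm is $2\,\langle\phi|(\proj{a_{d_A}}\otimes\openone)|\phi\rangle\leq 2$, with equality for the product input $\ket{a_{d_A}}\otimes\ket{\eta}$; hence $\bnorm{\Delta}{\diamond}=2$. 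Feeding in $\ket{\psi}$, and using $\langle a_{d_A}|\proj{\psi}|a_{d_A}\rangle=p_{d_A}\proj{b_{d_A}}$, gives $\bnorm{\Delta_A\otimes\id_B[\proj{\psi}]}{1}=2p_{d_A}$, so the ratio is exactly $p_{d_A}$ and $\CDP_A(\proj{\psi})\leq p_{d_A}$.

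I expect the upper bound to be the delicate part. Two points must be handled with care: first, that $\Delta$ genuinely arises as a difference of CPTP maps, which is precisely why the spectator output $\ket{2}$ is needed—a naive two-outcome "swap'' channel instead leaves a residual $\Tr X$ term, making the state sensitive to populations and spoiling the bound (it would give ratio $1$ rather than $p_{d_A}$); and second, that entangled inputs do not raise the diamond norm above the value $2$ attained by the product input aligned with $\ket{a_{d_A}}$, which holds here because $\Delta$ factors through the rank-one measurement $\proj{a_{d_A}}$. By contrast the lower bound is routine once the isometric embedding reduces it to Lemma~\ref{brandao2}.
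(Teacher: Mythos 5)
Your proposal is correct and follows essentially the same route as the paper: the lower bound invokes Lemma~\ref{brandao2} (you add the minor but welcome care of embedding $B$ into a $d_A$-dimensional ancilla via an isometry when $d_B\neq d_A$), and the upper bound uses exactly the paper's pair of measure-and-prepare channels with a spectator output absorbing the complementary outcome, yielding $\bnorm{\Delta}{\diamond}=2$ and $\bnorm{\Delta_A\otimes\id_B[\proj{\psi}]}{1}=2p_{d_A}$. No gaps.
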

\begin{proof}
Lemma \ref{brandao2} implies immediately $\CDP_A(\proj{\psi})\geq p_{d_A}$.
We will prove the inequality in the other direction, that is, $\CDP_A(\proj{\psi})\leq p_{d_A}$, by constructing a pair of perfectly distinguishable channels that are hard to distinguish by means of $\ket{\psi}$. We observe that, because in the case of pure states $\CDP_A$ only depends on the Schmidt coefficients, we can assume $\ket{a_k}=\ket{b_k}=\ket{k}$, without loss of generality. Let us introduce the channels
\begin{equation}
	\begin{aligned}
		\Lam_0[X]&=\Tr[PX]\proj{2}+\Tr[(\mathds{1}-P)X]\proj{0}\\
		\Lam_1[X]&=\Tr[PX]\proj{2}+\Tr[(\mathds{1}-P)X]\proj{1},
	\end{aligned}
	\label{eq:mapsupperbound}
\end{equation}
with $P=\sum_{i=1}^{d_A-1}\proj{i}$ and $\openone - P = \proj{d_A}$. Then, $\Delta[X]=\bra{d_A}X\ket{d_A}(\proj{0}-\proj{1})$. It is clear by their definition that the two channels are perfectly distinguishable, even without the use of an ancilla, since
\[
\Lambda_0[\proj{d_A}]=\proj{0},\quad \Lambda_1[\proj{d_A}]=\proj{1},
\]
so that $\|\Lambda_0-\Lambda_1\|_\diamond = \|\Lambda_0-\Lambda_1\|_1 = 2$.
On the other hand, 
\begin{multline*}
\bnorm{(\Lambda_0-\Lambda_1)\otimes\id\proj{\psi}}{1}\\
\begin{split}
&=\|(\proj{0}-\proj{1}) \otimes \Tr_A(\proj{d_A}_A\proj{\psi}_{AB})\|_1\nonumber\\
&=p_{d_A}\bnorm{(\proj{0}-\proj{1})\otimes \proj{d_A})}{1}\nonumber\\
&=2p_{d_A}.\nonumber
\end{split}
\end{multline*}
Thus, we have proven that it must be $\CDP_A(\proj{\psi})\leq p_{d_A}$. 
\end{proof}

\section{The channel discrimination power is maximal for maximally entangled states}

It is known that every extension $\rab$ of $\ra$ is obtained by a channel  acting on a purification of $\ra$. We provide a proof for completeness.

\begin{Proposition}\label{prolem}
The following are equivalent:
\begin{enumerate}[(i)]
\item There is a channel $\Lambda_{A'\rightarrow B}$ such that $\rab=(\id_A\otimes\Lambda_{A'\rightarrow B})[\Psi_{AA'}]$, for $\Psi_{AA'}$ a purification of $\rho_A$;
\item $\ra=\Tr_B(\rab)$. 
\end{enumerate}
\end{Proposition}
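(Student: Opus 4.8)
The plan is to treat the two implications separately, with $(i)\Rightarrow(ii)$ being essentially a one-line computation and $(ii)\Rightarrow(i)$ resting on the uniqueness of purifications together with the Stinespring form of a channel.

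For $(i)\Rightarrow(ii)$ I would simply trace out $B$ from $\rab=(\id_A\otimes\Lambda_{A'\to B})[\Psi_{AA'}]$. Since $\Lambda_{A'\to B}$ is trace-preserving, one has $\Tr_B\circ\Lambda_{A'\to B}=\Tr_{A'}$ as maps on $\cL(\cH_{A'})$, so $\Tr_B(\rab)=\Tr_{A'}(\Psi_{AA'})=\rho_A$, where the final equality is just the defining property of a purification.

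For $(ii)\Rightarrow(i)$, the substantive direction, I would first fix a purification $\ket{\Psi}_{AA'}$ of $\rho_A$, taking $A'$ of dimension equal to the rank of $\rho_A$ (concretely, via the eigendecomposition of $\rho_A$). Next I would introduce an arbitrary purification $\ket{\phi}_{ABC}$ of $\rab$ on some auxiliary system $C$, so that $\Tr_C(\proj{\phi}_{ABC})=\rab$. The key observation is that, by hypothesis (ii), $\Tr_{BC}(\proj{\phi}_{ABC})=\Tr_B(\rab)=\rho_A$; hence $\ket{\phi}_{ABC}$ is itself a purification of $\rho_A$, now on the composite purifying system $BC$. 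I would then invoke the uniqueness of purifications up to an isometry on the purifying system: since $\ket{\Psi}_{AA'}$ and $\ket{\phi}_{ABC}$ purify the same $\rho_A$ and $\dim(BC)\geq\dim A'=\mathrm{rank}(\rho_A)$, there exists an isometry $V:\cH_{A'}\to\cH_B\otimes\cH_C$ with $\ket{\phi}_{ABC}=(\id_A\otimes V)\ket{\Psi}_{AA'}$. Finally I would define $\Lambda_{A'\to B}[X]:=\Tr_C(VXV^\dagger)$; because $V^\dagger V=\id_{A'}$, this $\Lambda$ is completely positive and trace-preserving, i.e.\ a bona fide quantum channel in Stinespring form. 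Applying $\id_A\otimes\Lambda_{A'\to B}$ to $\Psi_{AA'}$ then yields $(\id_A\otimes\Lambda_{A'\to B})[\Psi_{AA'}]=\Tr_C(\proj{\phi}_{ABC})=\rab$, closing the argument.

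The main obstacle is the correct bookkeeping in the uniqueness-of-purifications step: one must ensure that the map relating the two purifying systems is a genuine \emph{isometry} (not merely a partial isometry), since it is precisely $V^\dagger V=\id_{A'}$ that guarantees trace-preservation of $\Lambda_{A'\to B}$. Choosing $A'$ of minimal dimension $\mathrm{rank}(\rho_A)$ forces $\dim(BC)\geq\dim A'$ and sidesteps this issue; alternatively one defines $V$ on the support of $\rho_A$ and extends it to an isometry, which needs a short check but introduces no new idea.
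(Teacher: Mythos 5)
Your proposal is correct and follows essentially the same route as the paper: the forward direction via trace preservation, and the reverse direction by purifying $\rho_{AB}$ to $\Phi_{ABC}$, observing it also purifies $\rho_A$, invoking Uhlmann's theorem to obtain an isometry $V_{A'\to BC}$ relating the two purifications, and defining $\Lambda_{A'\to B}[\cdot]=\Tr_C(V\cdot V^\dagger)$. Your added care in choosing $\dim A'=\mathrm{rank}(\rho_A)$ so that the relating map is a genuine isometry is a reasonable tightening of a point the paper passes over with a parenthetical remark, but it introduces no new idea.
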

\begin{proof}
That (i) implies (ii) is immediate, because $\Lambda_{A'\rightarrow B}$ is trace preserving.

For the reverse implication, consider a purification $\Phi_{ABC}$ of $\rab$. It is clear that $\Phi_{ABC}$ is also a purification of $\rho_A$.
We know from Uhllman's theorem~\cite{uhlmann1976transition} that different purifications of the same state are connected by a unitary transformation (technically speaking, unless the two spaces considered have the same dimensions, an isometry); hence we can write
\begin{align*}
\rab&=\Tr_C(\sigma_{ABC})\\
&=\Tr_C[(\mathds{1}\otimes U_{A'\rightarrow BC})\Psi_{AA'}(\mathds{1}\otimes U_{A'\rightarrow BC})^\dag]\\
&=(\id_A\otimes\Lambda_{A'\rightarrow B})[\Psi_{AA'}],
\end{align*}
with $\Lambda_{A'\rightarrow B} [\cdot]\deff \Tr_C[U_{A'\rightarrow BC} \cdot U_{A'\rightarrow BC}^\dagger]$.
\end{proof}

\begin{Theorem}
The channel discrimination power $\CDP_A$ is maximal for maximally entangled states, for which it is equal to $1/d_A$.
\end{Theorem}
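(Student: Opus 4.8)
The plan is to bound $\CDP_A(\rab)$ by a property of the reduced state $\ra$ alone—its smallest eigenvalue—by chaining together the three facts already in hand: Proposition~\ref{prolem} (every extension arises from a purification via a channel on the ancilla), the monotonicity of Proposition~\ref{monotonicity}, and the exact pure-state value of Theorem~\ref{cdppure}. Once the problem is reduced to maximizing the smallest eigenvalue of $\ra$, the optimization is a one-line computation.

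First I would invoke Proposition~\ref{prolem} to write $\rab = \id_A\otimes\Lambda_{A'\rightarrow B}[\Psi_{AA'}]$, where $\Psi_{AA'}$ is a purification of $\ra$, which I may take to live on an $A'$ of dimension $d_A$. Since $\Lambda_{A'\rightarrow B}$ acts only on the ancilla, Proposition~\ref{monotonicity} gives $\CDP_A(\rab)\leq\CDP_A(\Psi_{AA'})$. I would then evaluate the right-hand side with Theorem~\ref{cdppure}: the Schmidt coefficients of the purification $\Psi_{AA'}$ are exactly the eigenvalues of $\ra$, so $\CDP_A(\Psi_{AA'})=\lambda_{d_A}(\ra)$, the smallest eigenvalue of $\ra$. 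Hence $\CDP_A(\rab)\leq\lambda_{\min}(\ra)$ for every extension $\rab$ of $\ra$.

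Next I would carry out the optimization. Since $\ra$ is a density operator on a $d_A$-dimensional space, $1=\Tr(\ra)=\sum_{i=1}^{d_A}\lambda_i(\ra)\geq d_A\,\lambda_{\min}(\ra)$, whence $\lambda_{\min}(\ra)\leq 1/d_A$, with equality iff $\ra=\I_A/d_A$. Finally I would check saturation: for $\ket{\psi^+}$ the reduced state is $\ra=\I_A/d_A$, so $\lambda_{\min}(\ra)=1/d_A$, and Theorem~\ref{cdppure} gives $\CDP_A(\proj{\psi^+})=1/d_A$. This matches the upper bound, proving both that the maximum value of $\CDP_A$ is $1/d_A$ and that it is attained by maximally entangled pure states.

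The argument involves no delicate estimate, since every nontrivial ingredient is already established; the main point is conceptual, namely recognizing that the chain of reductions collapses the quantity of interest to $\lambda_{\min}(\ra)$. The one technical subtlety to handle with care is the choice of purifying space: I would take $A'$ of dimension $d_A$ so that the list of Schmidt coefficients of $\Psi_{AA'}$ has length exactly $d_A$ and its last entry is genuinely $\lambda_{\min}(\ra)$ (possibly zero, when $\ra$ is rank-deficient) rather than a truncated or undefined quantity, ensuring Theorem~\ref{cdppure} applies verbatim.
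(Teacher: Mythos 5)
Your argument is correct and follows essentially the same route as the paper's proof: reduce to pure states via Proposition~\ref{prolem} and the monotonicity of Proposition~\ref{monotonicity}, then apply Theorem~\ref{cdppure} and observe that the smallest Schmidt coefficient (equivalently, the smallest eigenvalue of $\ra$) cannot exceed $1/d_A$, with equality for the maximally entangled state. Your write-up is merely more explicit about the purifying-space dimension and the trace argument, which the paper leaves implicit.
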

\begin{proof}
Given Propositions \ref{monotonicity} and \ref{prolem}, it is clear that the maximum of the channel discrimination power is achieved by pure states. On the other hand, Theorem \ref{cdppure} tells us that the CDP of a pure state is equivalent to the (square) of the last Schmidt coefficient. The latter cannot be bigger than $1/d_A$, which is achieved for a maximally entangled state.
\end{proof}

\subsection{Bounds for the channel discrimination power of mixed states}

\begin{Theorem}\label{appendix:16}
Let $\rab=\sum_ir_iA_i\otimes B_i$ be the OSD of $\rab$, with $\{A_i\}$, $\{B_i\}$ Hermitian orthonormal bases for $\cL(\cH_A)$ and $\cL(\cH_B)$, respectively. Then 
\begin{equation}
\frac{r_{d_A^2}}{d^{5/2}_A} \leq \CDP_A(\rab)\leq \min_i\left\{r_i\frac{\bnorm{B_i}{1}}{\bnorm{A_i}{\infty}}\right\} \leq r_{d_A^2}\sqrt{d_Ad_B}.
\end{equation}
\end{Theorem}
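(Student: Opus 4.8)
Since the channel discrimination power may be written as $\CDP_A(\rab)=\inf_\Delta \bnorm{\Delta_A\otimes\id_B[\rab]}{1}/\bnorm{\Delta}{\diamond}$, the infimum running over differences $\Delta=\Lam_0-\Lam_1$ of channels, the statement is a chain of three inequalities whose real content is the middle quantity $\min_i\{r_i\bnorm{B_i}{1}/\bnorm{A_i}{\infty}\}$: the two inner bounds sandwich $\CDP_A$, while the rightmost inequality is an elementary $p$-norm comparison. The plan is to obtain the \emph{upper} bound by exhibiting, for each index $i$, one explicit pair of channels whose ratio equals $r_i\bnorm{B_i}{1}/\bnorm{A_i}{\infty}$, and to obtain the \emph{lower} bound by estimating the ratio from below for an arbitrary $\Delta$.

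For the upper bound I would fix $i$, rescale to $\tilde A_i=A_i/\bnorm{A_i}{\infty}$ so that $\bnorm{\tilde A_i}{\infty}=1$, and split it into positive and negative parts $\tilde A_i=\tilde A_i^+-\tilde A_i^-$; then $\{\tilde A_i^+,\tilde A_i^-,\openone-|\tilde A_i|\}$ is a valid POVM, and I use it to build two measure-and-prepare channels $\Lam_0,\Lam_1$ that prepare the same state on the third outcome but prepare $\proj{0}$ versus $\proj{1}$ on the first two. Their difference acts as $\Delta[X]=\bnorm{A_i}{\infty}^{-1}\Tr(A_iX)(\proj{0}-\proj{1})$, so orthonormality $\Tr(A_iA_k)=\delta_{ik}$ collapses the OSD to the single term $\Delta_A\otimes\id_B[\rab]=\bnorm{A_i}{\infty}^{-1}r_i(\proj{0}-\proj{1})\otimes B_i$, whose trace norm is $2\,r_i\bnorm{B_i}{1}/\bnorm{A_i}{\infty}$. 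Since the $\Lam_j$ are channels one has $\bnorm{\Delta}{\diamond}\le 2$, while $\bnorm{\Delta}{\diamond}\ge\bnorm{\Delta}{1}=2\sup_{\bnorm{X}{1}=1}|\Tr(\tilde A_iX)|=2\bnorm{\tilde A_i}{\infty}=2$; hence $\bnorm{\Delta}{\diamond}=2$ exactly and the ratio is precisely $r_i\bnorm{B_i}{1}/\bnorm{A_i}{\infty}$, which after minimizing over $i$ gives the upper bound.

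For the lower bound, set $M\deff\Delta_A\otimes\id_B[\rab]=\sum_k r_k\Delta[A_k]\otimes B_k$. Contracting against a single Hermitian $B_j$ gives the extraction identity $r_j\Delta[A_j]=\Tr_B[(\openone_A\otimes B_j)M]$; since the partial trace is trace-norm contractive and $\bnorm{B_j}{\infty}\le\bnorm{B_j}{2}=1$, this yields $\bnorm{M}{1}\ge r_j\bnorm{\Delta[A_j]}{1}\ge r_{d_A^2}\bnorm{\Delta[A_j]}{1}$ for every $j$. It then remains to find a $j$ for which $\bnorm{\Delta[A_j]}{1}$ is a fixed fraction of $\bnorm{\Delta}{\diamond}$: I would take the pure input $\phi_{AA'}$ (with $d_{A'}=d_A$) achieving $\bnorm{\Delta}{\diamond}$, expand $\proj{\phi}=\sum_j A_j\otimes G_j$ with $G_j=\Tr_A[(A_j\otimes\openone)\proj{\phi}]$ and $\bnorm{G_j}{1}\le\bnorm{A_j}{\infty}\le 1$, and conclude by the triangle inequality that $\bnorm{\Delta}{\diamond}=\bnorm{\sum_j\Delta[A_j]\otimes G_j}{1}\le\sum_j\bnorm{\Delta[A_j]}{1}\le d_A^2\max_j\bnorm{\Delta[A_j]}{1}$. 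Taking $j$ to be the maximizer then gives $\bnorm{M}{1}\ge r_{d_A^2}\bnorm{\Delta}{\diamond}/d_A^2$, i.e. $\CDP_A(\rab)\ge r_{d_A^2}/d_A^2\ge r_{d_A^2}/d_A^{5/2}$, which establishes (and in fact slightly strengthens) the claimed lower bound.

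Finally, the rightmost inequality follows by evaluating the minimum at $i=d_A^2$ and using the dimensional norm comparisons $\bnorm{B_{d_A^2}}{1}\le\sqrt{d_B}\,\bnorm{B_{d_A^2}}{2}=\sqrt{d_B}$ and $\bnorm{A_{d_A^2}}{\infty}\ge\bnorm{A_{d_A^2}}{2}/\sqrt{d_A}=1/\sqrt{d_A}$. I expect the lower bound to be the main obstacle: the upper bound is a direct construction, whereas the lower bound must simultaneously transfer the discrimination performance on $\rab$ back to the individual components $\Delta[A_j]$ (via the extraction identity) and control $\bnorm{\Delta}{\diamond}$ through the operator-basis expansion of its diamond-norm-optimal input, which is exactly where the dimension factors---and the unavoidable looseness of the bound---enter.
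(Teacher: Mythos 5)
Your proof is correct, but both halves take genuinely different routes from the paper's. For the upper bound, the paper perturbs the completely depolarizing channel, taking $\Lam_i[X]=\Tr(X)\openone/d_A+\epsilon\Tr(A_lX)Y_i$ with $\epsilon$ small enough for complete positivity, and must then evaluate $\bnorm{\Delta}{\diamond}$ exactly by proving the auxiliary optimization $\max_{\bnorm{C}{2}=1}\bnorm{CXC^\dagger}{1}=\bnorm{X}{\infty}$; your measure-and-prepare construction from the POVM $\{\tilde A_i^+,\tilde A_i^-,\openone-|\tilde A_i|\}$ reaches the same value $r_i\bnorm{B_i}{1}/\bnorm{A_i}{\infty}$ while sidestepping both the $\epsilon$-bookkeeping and that lemma, since $\bnorm{\Delta}{\diamond}$ is pinned to $2$ by the elementary sandwich $\bnorm{\Delta}{1}\leq\bnorm{\Delta}{\diamond}\leq 2$. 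For the lower bound, both arguments reduce to $\bnorm{\Delta_A\otimes\id_B[\rab]}{1}\geq r_{d_A^2}\max_j\bnorm{\Delta[A_j]}{1}$ (you via the partial-trace extraction identity and H\"older, the paper via a product dual operator $M_A\otimes B_k/\bnorm{B_k}{\infty}$), but you then control $\bnorm{\Delta}{\diamond}$ by expanding the diamond-optimal input $\proj{\phi}$ directly in the basis $\{A_j\}$ with $\bnorm{G_j}{1}\leq\bnorm{A_j}{\infty}\leq 1$, giving $\bnorm{\Delta}{\diamond}\leq d_A^2\max_j\bnorm{\Delta[A_j]}{1}$, whereas the paper routes through the maximally entangled state and its Lemma on pure states, paying an extra factor $\bnorm{A_j^*}{1}\leq\sqrt{d_A}$. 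Your constant $d_A^{-2}$ is therefore strictly better than the stated $d_A^{-5/2}$ (which it of course implies), and the rightmost $p$-norm comparison is handled identically in both.
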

\begin{proof}

We first prove $ r_{d_A^2}/d_A^{5/2} \leq \CDP_A(\rab)$.

We start by finding a lower bound for the numerator in the definition of the $\CDP_A(\rab)$. First, observe that
\begin{multline*}
\|\Delta\otimes \id[\rab]\|_1\\
\begin{aligned}
&=\bnorm{\sum_ir_i\Delta(A_i)\otimes B_i}{1}\nonumber \\
& =\max_{-\openone \leq M_{AB} \leq \openone} \left|\Tr\left(M_{AB} \sum_ir_i\Delta(A_i)\otimes B_i \right)\right| \\
& \geq \max_{\substack{-\openone \leq M_A \leq \openone\\ -\openone \leq M_{B} \leq \openone}} \left|\Tr\left(M_A\otimes M_B \sum_i r_i\Delta(A_i)\otimes B_i \right)\right| \\
& \geq \max_i \left\{ r_i \frac{ \| \Delta[A_i] \|_1}{\|B_i\|_\infty} \right\}. \\
& \geq r_{d_A^2} \max_i \|\Delta[A_i]\|_1
\end{aligned}
\end{multline*}
The first inequality is due to restricting the class of operators $M_{AB}$ to be product. The second inequality is due to further choosing $M_A$ such that $\|\Delta[A_k]\|_1 = |\Tr(M_A \Delta[A_k] )|$ and $M_B = B_k / \|B_k\|_\infty$, with $k$ the index such that the maximum over $i$ in the last line is achieved.  Notice that, because of the orthonormality of the $B_i's$, this choice for $M_B$ selects only one term in the sum. The last inequality is due to the fact that $\|B_i\|_\infty \leq \|B_i\|_2 = 1$, and that $r_i \geq r_{d_A^2}$ by assumption.

The maximally entangled state can be expressed as $\proj{\psi^+}=\frac{1}{d_A}\sum_{i=1}^{d_A^2}C_i\otimes{C^*_i}$ for any orthonormal operator basis $\{C_k\}\subset \cL(\cH_A)$, in particular for the one appearing in the OSD of $\rab$. Thus,  using Lemma \ref{brandao2},
\begin{align*}
\bnorm{\Delta}{\diamond}
&\leq d_A\bnorm{\doi[\proj{\psi^+}]}{1}\nonumber\\
&=d_A\bnorm{\frac{1}{d_A}\sum_{i}\Delta\left[A_i\right]\otimes A^*_i}{1}\nonumber\\
& \leq \sum_{i}\|\Delta\left[A_i\right]\|_1\|A_i^*\|_1\nonumber\\
&\leq d^{5/2}_A\max_i \bnorm{\Delta[A_i]}{1},
\end{align*} 
having used the triangle inequality, the fact that there are $d_A^2$ terms in the sum, and that $\|A_i^*\|_1 = \|A_i\|_1 \leq \sqrt{d_A} \|A_i\|_2 = \sqrt{d_A}$.
Thus, combining the above,
\begin{align*}
\CDP_A(\rab)&=\inf_{\Delta}\frac{\bnorm{\doi[\rab]}{1}}{\bnorm{\Delta}{\diamond}}\geq \frac{r_{d_A^2}}{d^{5/2}_A},
\end{align*}
which completes the first part of the theorem.

We now show how to upper bound the CDP. To do that, let us consider the following channels:
$$\Lam_{i}[X]=\Tr(X)\frac{\mathds{1}}{d_A}+\epsilon\Tr(A_l X)Y_{i},$$
for $i=0,1$, with traceless Hermitian operators $Y_0$ and $Y_1$, and $A_l$ is the local basis operator of the OSD of $\rab$ corresponding to the $l$\textsuperscript{th} OSC $r_l$. Such maps are trace-preserving by construction, and completely positive for $\epsilon$ small enough, e.g. for $\epsilon \leq 1/(d_A \|A_l\|_\infty \|\max\{\|Y_0\|_\infty, \|Y_1\|_\infty\})$. Then,
$$
\Delta[X]=\epsilon\Tr(A_{l} X)(Y_0-Y_1),
$$
and 
\begin{align}\label{tre}
\bnorm{\doi[\rab]}{1}&=\epsilon\bnorm{\sum_i r_i (Y_0-Y_1)\Tr(A_{l}A_i)\otimes B_i}{1}\nonumber\\
&=\epsilon\bnorm{r_l(Y_0-Y_1)\otimes B_{l}}{1}\nonumber\\
&=r_l\epsilon\bnorm{Y_0-Y_1}{1}\bnorm{B_{l}}{1}.
\end{align}
On the other hand, we claim that 
\begin{equation}\label{quattro}
\bnorm{\Delta}{\diamond}=\epsilon\bnorm{Y_0-Y_1}{1}\bnorm{A_{l}}{\infty}.
\end{equation}
Before proving such claim, let us notice that Eqs. \eqref{tre} and \eqref{quattro} complete the proof of the theorem. Indeed, by recalling the definition of the CDP and using Eqs. \eqref{tre} and \eqref{quattro}, one  gets 
\begin{align*}
\CDP_A(\rab)&\leq
r_l\frac{\bnorm{B_l}{1}}{\bnorm{A_l}{\infty}},
\end{align*}
for any $l$, that is
\begin{align*}
\CDP_A(\rab)&\leq
\min_i\left\{r_i\frac{\bnorm{B_i}{1}}{\bnorm{A_i}{\infty}}\right\}.
\end{align*}
We observe that the right-hand side can be itself upper bounded:
\begin{align*}
\min_i\left\{r_i\frac{\bnorm{B_i}{1}}{\bnorm{A_i}{\infty}}\right\}
&\leq r_{d_A^2}\frac{\bnorm{B_d}{1}}{\bnorm{A_d}{\infty}}\\
&\leq r_{d_A^2}\frac{d_B^{1/2}\bnorm{B_d}{2}}{d_A^{-1/2}\bnorm{A_d}{2}}\\
&=r_{d_A^2}(d_Ad_B)^{1/2},
\end{align*}
where we have used properties of the $p$-norms in the second inequality.

We now prove \reff{quattro}. To do so, let us consider an arbitrary
\[
\begin{aligned}
\ket{\psi}
&=\sum_i\sqrt{p_i}\ket{a_i}\ket{b_i}\\
&=(\mathbb{1}\otimes C) \ket{\tilde{\psi}^+}
\end{aligned}
\]
where $\|C\|_2=1$ for $\ket{\psi}$ to be normalized (see the proof of Lemma~\ref{brandao2}).
Notice that
\[
\begin{split}
\|\doi[\proj{\psi}]\|_1
& =\| (\mathbb{1}\otimes C) (\doi[\proj{\tilde\psi^+}]) (\mathbb{1}\otimes C)^\dagger\|_1 \\
& = \|\epsilon (Y_0 - Y_1 ) \otimes CA_l^T C^\dagger\|_1 \\
& = \epsilon \|Y_0 - Y_1 \|_1 \|CA_l^T C^\dagger\|_1. \\
\end{split}
\]
Thus, it is sufficient to prove that, for a given $X=X^\dagger$, 
\[
\max_{\|C\|_2=1} \|C X C^\dagger\|_1 = \|X\|_\infty.
\]
Notice that $\|X\|_\infty = \|X^T\|_\infty$.

Let $\ket{x}$ be the eigenvector of $X$ corresponding to the largest eigenvalue (in modulus) $\|X\|_\infty$. Choosing $C=\proj{x}$ we have  $\|C X C^\dagger \|_1= \|\proj{x} X\proj{x}\|_1 = \|X\|_\infty$, thus $\max_{\|C\|_2=1} \|C X C^\dagger\|_1 \geq \|X\|_\infty$. 

To prove the other direction, it is useful to recall the polar decomposition $C = U\sqrt{\rho}$, for $\rho$ a normalized state and $U$ a unitary, and the unitary invariance of the $p$-norms, so that what we aim to prove can be cast as
\[
\begin{aligned}
\max_{\|C\|_2=1} \|C X C^\dagger\|_1
& = \max_{\rho \geq 0, \Tr(\rho)=1} \|\sqrt{\rho} X \sqrt{\rho}\|_1 \\
& \leq \|X\|_\infty.
\end{aligned}
\]

Let us also recall that a Hermitian matrix can be expressed as the difference of two positive semidefinite matrices with orthogonal support:
\begin{align*}
X=X^+-X^-,
\end{align*}
with $X^\pm\geq0$, $X^+X^- = X^-X^+ = 0$.
Then,
\begin{align}
\bnorm{\sqrt{\rho}X\sqrt{\rho}}{1}&=\bnorm{\sqrt{\rho}(X^+-X^-)\sqrt{\rho}}{1}\nonumber\\
&=\bnorm{\sqrt{\rho}X^+\sqrt{\rho}-\sqrt{\rho}X^-\sqrt{\rho}}{1}\nonumber\\
&\leq\bnorm{\sqrt{\rho}X^+\sqrt{\rho}}{1}+\bnorm{\sqrt{\rho}X^-\sqrt{\rho}}{1}\nonumber\\
&=\Tr(\rho X^+)+\Tr(\rho X^-)\nonumber\\
&=\Tr(\rho(X^++X^-))\nonumber\\
&\leq\bnorm{X}{\infty}\Tr(\rho)\nonumber\\
&=\bnorm{X}{\infty}.\label{sette}
\end{align}

\noindent
In the second to last line we have used $0\leq X^++X^-\leq\bnorm{X}{\infty}\cdot\mathds{1}$. 
We have proved the claim in \reff{quattro}, hence the theorem.

\end{proof}

\subsection{Bound on the channel discrimination power based on disturbance by operations that reduce the operator Schmidt rank}

Here we want to study the behaviour of the CDP under the action of maps that reduce the OSR. 
\begin{Theorem} 
We have
\[
\CDP_A(\rab)\leq  \min_{\substack{\Lam \text{\ s.t.} \\ \OSR\left(\Lam\otimes \id[\rab]\right) < d_A^2}}\bnorm{\rab-\Lam\otimes\id[\rab]}{1}
\]
where the minimization is over all channels that acting on $A$ reduce the OSR of $\rho_{AB}$ to less than maximal.
\end{Theorem}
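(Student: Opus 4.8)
The plan is to reduce the statement to two ingredients already available: the continuity bound of Proposition~\ref{continuity}, and the mechanism---used in the upper-bound half of Theorem~\ref{appendix:16}---by which a state of non-maximal operator Schmidt rank is ``blind'' to a genuine pair of channels. Writing $\sab=\Lam\otimes\id[\rab]$ for any channel $\Lam$ on $A$ with $\OSR(\sab)<d_A^2$, the core of the argument is the lemma that $\CDP_A(\sab)=0$. Granting this, Proposition~\ref{continuity} gives
\[
\CDP_A(\rab)=|\CDP_A(\rab)-\CDP_A(\sab)|\leq \bnorm{\rab-\sab}{1}=\bnorm{\rab-\Lam\otimes\id[\rab]}{1},
\]
and minimizing the right-hand side over all admissible $\Lam$ yields the theorem.

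To prove the lemma I would exhibit a single nonzero difference of channels that annihilates $\sab$. Let $\sab=\sum_i s_i A_i\otimes B_i$ be the OSD with the $A_i$ Hermitian. Since $\OSR(\sab)<d_A^2$, the $A_i$ span a proper real subspace of the Hermitian part of $\cL(\cH_A)$, so there is a nonzero Hermitian $W$ with $\Tr(WA_i)=0$ for every $i$. Reusing the construction of Theorem~\ref{appendix:16}, set
\[
\Lam_i[X]=\Tr(X)\frac{\mathds{1}}{d_A}+\epsilon\,\Tr(WX)\,Y_i,\qquad i=0,1,
\]
with $Y_0\neq Y_1$ traceless Hermitian and $\epsilon>0$ small enough for complete positivity (the same bound as in Theorem~\ref{appendix:16}, with $A_l$ replaced by $W$); these are trace preserving because the $Y_i$ are traceless. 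Then $\Delta[X]=\epsilon\,\Tr(WX)(Y_0-Y_1)$, so that
\[
\doi[\sab]=\epsilon\,(Y_0-Y_1)\otimes\sum_i s_i\Tr(WA_i)\,B_i=0,
\]
i.e.\ the numerator in $\CDP_A(\sab)$ vanishes for this $\Delta$.

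The main obstacle is the lemma rather than the continuity step, and within it the delicate point is that $\Delta$ must be a difference of two \emph{legitimate} quantum channels with \emph{strictly positive} diamond norm, so that the ratio defining $\CDP_A(\sab)$ is $0/(\text{positive})$ and not an indeterminate $0/0$. This is precisely why the construction routes through the completely positive trace-preserving maps $\Lam_i$ rather than through an arbitrary Hermiticity-preserving map annihilating $\sab$. The positivity of the denominator is then free: the computation proving claim~\eqref{quattro} in Theorem~\ref{appendix:16} applies verbatim with $A_l$ replaced by $W$, giving $\bnorm{\Delta}{\diamond}=\epsilon\bnorm{Y_0-Y_1}{1}\bnorm{W}{\infty}>0$ since $W\neq0$ and $Y_0\neq Y_1$. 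Hence the infimum defining $\CDP_A(\sab)$ is $0$, no new estimate is required, and the theorem follows.
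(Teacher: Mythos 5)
Your proof is correct and follows essentially the same route as the paper's: both reduce the claim to the continuity bound $|\CDP_A(\rab)-\CDP_A(\sab)|\leq\bnorm{\rab-\sab}{1}$ for $\sab=\Lam\otimes\id[\rab]$ (the paper re-derives this inline via the triangle inequality and Proposition~\ref{watt}) combined with the vanishing of $\CDP_A$ on states of non-maximal operator Schmidt rank. The only difference is one of explicitness: you prove the vanishing lemma via the $W$-construction, whereas the paper simply invokes it, as it already follows from the upper bound $\CDP_A(\sab)\leq r_{d_A^2}(\sab)\sqrt{d_Ad_B}=0$ of Theorem~\ref{appendix:16}.
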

\begin{proof}
It holds
\begin{align*}
\bnorm{\doi[\rab]}{1}
&\leq\bnorm{\doi[\rab-\Lam\otimes\id[\rab]]}{1} \\
& \quad + \bnorm{(\Delta\circ\Lam)\otimes\id[\rab]}{1}\nonumber\\
&\leq \bnorm{\Delta}{\diamond}\bnorm{\rab-\Lam\otimes\id[\rab]}{1} \\
& \quad + \bnorm{\Delta\otimes\id[\Lambda\otimes\id[\rab]]}{1},\label{otto}
\end{align*}
having used Proposition \ref{watt}. Then,
\begin{align*}
\inf_{\Delta}\frac{\bnorm{\Delta\otimes\id[\rab]}{1}}{\bnorm{\Delta}{\diamond}}
&\leq\bnorm{\rab-\Lam\otimes\id[\rab]}{1} \\
& \quad +\inf_\Delta \frac{\bnorm{\Delta\otimes\id[\Lambda\otimes\id[\rab]]}{1}}{\bnorm{\Delta}{\diamond}}\nonumber\\
&=\bnorm{\rab-\Lam\otimes\id[\rab]}{1}, \\
\end{align*}
\noindent
where we have used that the CDP of $\Lambda\otimes\id[\rab]$ (the second term on the right-hand side of the inequality) vanishes under the assumption $ \OSR\left(\Lam\otimes \id\,[\rab]\right) < d_A^2$.
It finally follows
\begin{equation*}
\CDP_A(\rab)\leq \min_{\substack{\Lam \text{\ s.t.} \\ \OSR\left(\Lam\otimes \emph{id}\,[\rab]\right)\,<\, d_A^2}}\bnorm{\rab-\Lam\otimes\id[\rab]}{1}.
\end{equation*}
\end{proof}
As a particular example involving the last theorem, let $\Pi[X]=\sum_{i=1}^d\proj{i}X\proj{i}$ be the channel which dephases in an arbitrary basis. Then 
\begin{equation*}
\CDP_A(\rab)\hspace{.2cm}\leq  \min_{\Pi_A\otimes\id_B}\bnorm{\rab-\Pi_A\otimes\id_B[\rab]}{1}.
\end{equation*}

\subsection{CDP bounds for isotropic states}

We are considering isotropic states, i.e. states of the form
\begin{equation}\label{isos}
\rho_{\text{iso}}(p)=\frac{1-p}{d^2}\,\mathds{1}+p \proj{\psi^+},
\end{equation}
where $0\leq p\leq 1$ and $\ket{\psi^+}$ is the standard maximally entangled state.
It is known and immediate to check that
\[
\proj{\psi^+} = \frac{1}{d}\sum_{k=1}^{d^2} A_k\otimes A^*_k
\]
for any orthonormal operator basis $\{A_k\}$, with complex conjugation taken in the local Schmidt basis of the maximally entangled state. We can choose $A_1 = \frac{\openone}{\sqrt{d}}$, and find immediately
\begin{equation}
\label{eq:OSDiso}
\rho_{\text{iso}}(p)=\frac{1}{d} \frac{\openone}{\sqrt{d}} \otimes\frac{\openone}{\sqrt{d}}
+
\frac{p}{d}\sum_{k=2}^{d^2} A_k\otimes A^*_k,
\end{equation}
where $\{A_k\}$ is any collection of $d^2-1$ traceless orthonormal operators. Thus, the OSCs of $\rho_{\text{iso}}(p)$ are evidently $(1/d,p/d,\ldots,p/d)$.

\begin{Theorem}
For the isotropic state it holds
\begin{equation}
\frac{p}{d+1-p}
\leq
\CDP_A(\rho_{\text{iso}}(p))
\leq\min\left\{2\frac{p}{d},\frac{1}{d}\right\}.
\end{equation}
\end{Theorem}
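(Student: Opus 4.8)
The plan is to establish the two bounds by separate tools, since the upper bound is an application of the earlier OSD estimates while the lower bound requires relating $\rho_{\text{iso}}(p)$ to the maximally entangled state.

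For the upper bound $\CDP_A(\rho_{\text{iso}}(p))\leq\min\{2p/d,1/d\}$ I would prove the two competing estimates independently. The bound $1/d$ is immediate: it is the universal maximum of the CDP, attained on maximally entangled states, hence valid for every state; equivalently one writes $\rho_{\text{iso}}(p)=\id_A\otimes D_{p,B}[\proj{\psi^+}]$ for the depolarizing channel $D_p$ and combines monotonicity under channels on $B$ with $\CDP_A(\proj{\psi^+})=1/d$. For the bound $2p/d$ I would feed the OSD \eqref{eq:OSDiso} into the middle inequality of Theorem~\ref{appendix:16}, $\CDP_A(\rab)\leq\min_i r_i\|B_i\|_1/\|A_i\|_\infty$. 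Because all traceless operator Schmidt coefficients are degenerate (equal to $p/d$), the traceless part of the OSD is basis-independent, so I am free to rotate the traceless operator basis to include $A_k=\tfrac{1}{\sqrt2}(\proj{0}-\proj{1})$, with $B_k=A_k^{*}=A_k$; this single term gives $r_k\|B_k\|_1/\|A_k\|_\infty=\tfrac{p}{d}\cdot\sqrt2/(1/\sqrt2)=2p/d$.

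For the lower bound, set $Z:=\Delta_A\otimes\id_B[\rho_{\text{iso}}(p)]$ and $T:=\Delta_A\otimes\id_B[\proj{\psi^+}]$. Using $\rho_{\text{iso}}(p)=p\proj{\psi^+}+\tfrac{1-p}{d^2}\openone$ together with the identity $\Tr_B Z=\Tr_B T=\tfrac1d\Delta[\openone_A]=:R$ (both reduced operators equal $\Delta_A[\openone_A/d]$), one gets
\[
Z=pT+\frac{1-p}{d}\,R\otimes\openone_B,\qquad T=\frac1p Z-\frac{1-p}{pd}\,(\Tr_B Z)\otimes\openone_B .
\]
I would then invoke the representation underlying Lemma~\ref{brandao2}: the diamond norm is attained on a pure input $\ket\phi=(\openone\otimes C)\ket{\tilde\psi^+}$ with $\|C\|_2=1$, so that
\[
\|\Delta\|_\diamond=d\,\bnorm{(\openone\otimes C)\,T\,(\openone\otimes C^\dagger)}{1}.
\]
Substituting the expression for $T$ and using the triangle inequality splits this into two pieces, which I would control by (i) H\"older's inequality with $\|C\|_\infty\leq\|C\|_2=1$ for the term $(\openone\otimes C)Z(\openone\otimes C^\dagger)$, and (ii) the factorization $\|(\Tr_B Z)\otimes CC^\dagger\|_1=\|\Tr_B Z\|_1\,\|C\|_2^2=\|\Tr_B Z\|_1$ followed by the partial-trace contraction $\|\Tr_B Z\|_1\leq\|Z\|_1$. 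The outcome is $\|\Delta\|_\diamond\leq\tfrac{d+1-p}{p}\,\|Z\|_1$, i.e. $\CDP_A(\rho_{\text{iso}}(p))\geq p/(d+1-p)$.

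The step I expect to be the main obstacle is the local ``identity contamination'' term $\tfrac{1-p}{d}R\otimes\openone_B$, which is what separates the isotropic state from the maximally entangled one. The naive route — recover $T$ from $Z$ by applying $D_p^{-1}$ on $B$ and bound $\|T\|_1\leq\|D_p^{-1}\|_\diamond\|Z\|_1$ — loses a constant factor, since $\|D_p^{-1}\|_\diamond\sim 2/p-1$ rather than $\sim1/p$, the two inequalities $\|\Delta\|_\diamond\leq d\|T\|_1$ and $\|T\|_1\leq\|D_p^{-1}\|_\diamond\|Z\|_1$ being saturated by different $\Delta$. The sharp constant $d+1-p$ survives only if the conjugation by $C$ is kept explicit, so that the contamination term picks up the exact factor $\|CC^\dagger\|_1=1$ instead of the inflated diamond norm of the amplifying map $D_p^{-1}$. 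As a consistency check I would verify the endpoint $p=1$, where the coefficient $(1-p)/p$ vanishes and the estimate reduces to Lemma~\ref{brandao2}, reproducing $\CDP_A=1/d$.
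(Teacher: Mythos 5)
Your proposal is correct and follows essentially the same route as the paper: the $2p/d$ bound comes from plugging a rotated traceless basis element into the $\min_i r_i\|B_i\|_1/\|A_i\|_\infty$ bound of Theorem~\ref{appendix:16}, and the lower bound is exactly the paper's generalization of Lemma~\ref{brandao2}, writing the optimal diamond-norm input as $(\openone\otimes C)\ket{\tilde\psi^+}$, expressing the maximally entangled part in terms of the (conjugated) isotropic state plus an identity term, and controlling the two pieces by H\"older with $\|C\|_\infty\leq\|C\|_2=1$ and the partial-trace contraction $\|\Tr_B Z\|_1\leq\|Z\|_1$. The only cosmetic differences are that the paper obtains the $1/d$ bound by reusing the explicit channels \eqref{eq:mapsupperbound} rather than invoking the universal maximum, and packages your $T$, $Z$ bookkeeping through the auxiliary state $\sigma(p)=d(\openone\otimes C)\rho_{\text{iso}}(p)(\openone\otimes C^\dagger)$.
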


\begin{proof}
We start by proving the upper bound. That $\CDP_A(\rho_{\text{iso}}(p))\leq 1/d$ can be straightforwardly be verified by using the same two maps \eqref{eq:mapsupperbound} that were used to prove the upper bound for pure states. In order to prove $\CDP_A(\rho_{\text{iso}}(p)) \leq 2p/d$, we will use the bound $\CDP_A(\rab) \leq \min_i\left\{r_i\frac{\bnorm{B_i}{1}}{\bnorm{A_i}{\infty}}\right\}$ from Theorem \ref{appendix:16}, exploiting the freedom in choosing the decomposition \eqref{eq:OSDiso}. E.g., we can choose $A_2 = (\ket{1}\bra{2}+\ket{2}\bra{1})/\sqrt{2}$, with $B_2 = A_2^* = A_2$, so that $\bnorm{A_2}{\infty}=1/\sqrt{2}$ and $\bnorm{B_2}{1} = \sqrt{2}$. Thus,
\[
\CDP_A(\rho_{\text{iso}}(p)) \leq r_2 \frac{\bnorm{B_2}{1}}{\bnorm{A_2}{\infty}} = \frac{p}{d} \frac{\bnorm{A_2}{1}}{\bnorm{A_2}{\infty}} = \frac{p}{d} 2.
\]

For the lower bound, we generalize the approach of Lemma~\ref{brandao2}.

Given two arbitrary channels, let $\proj{\psi}$ be optimal for the diamond norm of their difference, i.e.
\begin{equation*}
\bnorm{\Delta}{\diamond}=\sup_\rho\bnorm{\doi [\rho] }{1}=\bnorm{\doi [\proj{\psi}] }{1}
\end{equation*}
and let us consider $C$ such that
\[
\ket\psi_{AA'}=(\mathds{1}\otimes C)\ket{\tilde{\psi}^+}_{AA'}
\]
Notice that $\Tr_A(\proj{\psi})=CC^\dagger$, with $CC^\dagger \geq 0 $ a normalized state.

Let us define the state
\begin{align*}
\sigma(p)&\deff(1-p)\frac{\mathds{1}}{d}\otimes CC^\dagger+p\proj{\psi}\\
&=d(\mathds{1}\otimes{C})\left[(1-p)\frac{\mathds{1}}{d}\otimes\frac{\mathds{1}}{d}+p\mes\right](\mathds{1}\otimes{C^\dagger})\\
&=d(\mathds{1}\otimes{C})\ \rho_{\text{iso}}(p)\ (\mathds{1}\otimes{C^\dagger}).
\end{align*}
Then,
\begin{align*}
\proj{\psi}
&=\frac{1}{p}\left[\sigma{(p)}-(1-p)\frac{\mathds{1}}{d}\otimes CC^\dagger\right],
\end{align*}
and
\begin{align*}
\bnorm{\Delta}{\diamond}
&=\bnorm{\doi [\proj{\psi}] }{1}\\
&=\bnorm{\frac{1}{p}\left[\doi[\sigma{(p)}]-(1-p)\Delta\left[\frac{\mathds{1}}{d}\right]\otimes CC^\dagger\right]}{1}\\
&\leq \frac{1}{p}\bnorm{\doi[\sigma{(p)}]}{1}+\frac{1-p}{p}\bnorm{\Delta\left[\frac{\mathds{1}}{d}\right] }{1} \\
&=\frac{d}{p}\bnorm{(\mathds{1}\otimes{C})\ \doi[\rho_{\text{iso}}(p)]
\ (\mathds{1}\otimes{C^\dagger})}{1}\\
&\quad+\frac{1-p}{p}\bnorm{\Delta\left[\frac{\mathds{1}}{d}\right]}{1}\\
&\leq\frac{d}{p}\bnorm{C}{\infty}^2
\bnorm{\doi[\rho_{\text{iso}}(p)]}{1}+\frac{1-p}{p}\bnorm{\Delta\left[\frac{\mathds{1}}{d}\right]}{1}\\
&\leq \frac{d}{p}\bnorm{\doi[\rho_{\text{iso}}(p)}{1}+\frac{1-p}{p}\bnorm{\Delta\left[\frac{\mathds{1}}{d}\right]}{1}.
\end{align*}
Finally, since $\frac{\mathds{1}}{d}=\Tr_B(\rho_{\text{iso}}(p))$ and the partial trace is a channel, the monotonicity of the trace distance implies
\begin{align*}
\bnorm{\Delta\left[\frac{\mathds{1}}{d}\right]}{1}&=\bnorm{\Delta_A\left[\Tr_B(\rho_{\text{iso}}(p))\right]}{1}\nonumber\\
&=\bnorm{\Tr_B\left(\Delta_A[\rho_{\text{iso}}(p)]\right)}{1}\nonumber\\
&\leq\bnorm{\doi[\rho_{\text{iso}}(p)]}{1},
\end{align*}
Thus,
\begin{equation*}
\bnorm{\Delta}{\diamond}\leq \left(\frac{d+1-p}{p}\right)\bnorm{\doi[\rho_{\text{iso}}(p)]}{1},
\end{equation*}
from which we obtain
\begin{equation*}
\CDP_A(\rho_{\text{iso}}(p))\geq\frac{p}{d+1-p}.
\end{equation*}
\end{proof}

\section{Bound on the channel discrimination power for states that satisfy the realignment criterion of separability}

In this section we provide the tools to prove the bound on the channel discrimination power of states that respect the realignment criterion of separability. These include, obviously, all separable states, but also ``weakly'' entangled states that are not detected by the realignment criterion.

\begin{Lemma}\label{lem:dist}
Let $r_i(\rho_{AB})$ be the ordered operator Schmidt coefficients of $\rho$. Then
\[
\sum_{i\geq2}r^2_i(\rho_{AB}) = \Tr(\rho^2)- r_1^2 \leq \|\rho_{AB} - \sigma_A\otimes \sigma_B\|^2_2,
\]
for any product state $\sigma_A\otimes \sigma_B$.
\end{Lemma}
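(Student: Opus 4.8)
The plan is to translate the Hilbert--Schmidt distance into the geometry of the operator Schmidt decomposition and to recognize that a product state is nothing but a \emph{simple tensor} in $\cL(\cH_A)\otimes\cL(\cH_B)$, so that the right-hand side is really a rank-one approximation problem for the correlation data of $\rho_{AB}$.

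First I would dispose of the equality, which is immediate: since the OSCs are the singular values of the correlation matrix, they satisfy $\sum_i r_i^2=\llangle\rho|\rho\rrangle=\Tr(\rho^2)$ (as recorded above), whence $\sum_{i\ge2}r_i^2=\Tr(\rho^2)-r_1^2$. For the inequality I would fix the Hermitian orthonormal operator bases $\{A_i\}$, $\{B_i\}$ that realize the OSD $\rho_{AB}=\sum_i r_i A_i\otimes B_i$, and expand the given product state in the \emph{same} bases, $\sigma_A=\sum_i\alpha_iA_i$ and $\sigma_B=\sum_j\beta_jB_j$, with real coefficients $\alpha_i=\llangle A_i|\sigma_A\rrangle$ and $\beta_j=\llangle B_j|\sigma_B\rrangle$ (real because all operators involved are Hermitian). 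Because $\{A_i\otimes B_j\}$ is orthonormal for the Hilbert--Schmidt inner product, Parseval gives
\[
\|\rho_{AB}-\sigma_A\otimes\sigma_B\|_2^2=\sum_{i,j}\big(r_i\delta_{ij}-\alpha_i\beta_j\big)^2=\big\|D-\alpha\beta^{T}\big\|_F^2,
\]
where $D=\operatorname{diag}(r_1,r_2,\dots)$ and $\alpha\beta^T$ denotes the rank-one matrix with entries $\alpha_i\beta_j$.

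The last step is the rank-one approximation estimate. Expanding the Frobenius norm,
\[
\|D-\alpha\beta^T\|_F^2=\sum_i r_i^2-2\sum_i r_i\alpha_i\beta_i+\|\alpha\|^2\|\beta\|^2,
\]
so the claim reduces to $\|\alpha\|^2\|\beta\|^2+r_1^2\ge 2\sum_i r_i\alpha_i\beta_i$. This follows by bounding $\sum_i r_i\alpha_i\beta_i\le r_1\sum_i|\alpha_i\beta_i|\le r_1\|\alpha\|\,\|\beta\|$ using $0\le r_i\le r_1$ and Cauchy--Schwarz, and then $2r_1\|\alpha\|\,\|\beta\|\le\|\alpha\|^2\|\beta\|^2+r_1^2$ by the arithmetic--geometric mean inequality; equivalently it is the $k=1$ Eckart--Young--Mirsky bound $\min_{\operatorname{rank}R\le1}\|D-R\|_F^2=\sum_{i\ge2}r_i^2$ applied to the diagonal $D$ with nonnegative ordered entries.

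I expect the only genuine subtlety to be the bookkeeping that a product state always yields a rank-one coefficient matrix, so that the constraint ``$\sigma_A,\sigma_B$ are states'' is harmless: it merely singles out a particular rank-one $\alpha\beta^T$, for which the lower bound already holds. The analytic heart of the argument is entirely elementary (Cauchy--Schwarz plus arithmetic--geometric mean), so once the coefficient-matrix reduction is set up correctly there is no real difficulty.
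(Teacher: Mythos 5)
Your proof is correct, but it takes a genuinely different route from the paper's. The paper works at the level of the correlation matrix $C(\rho_{AB})$ and invokes Mirsky's singular-value perturbation inequality (Corollary 7.3.5 in Horn--Johnson), $\sum_i(\sigma_i(M)-\sigma_i(N))^2\leq\|M-N\|_2^2$, applied to $M=C(\rho_{AB})$ and $N=C(\sigma_A\otimes\sigma_B)$; since a product operator has at most one nonzero singular value, dropping the $i=1$ term gives the bound in two lines. You instead expand everything in the OSD bases of $\rho_{AB}$ itself, reduce the claim to the elementary inequality $\|\alpha\|^2\|\beta\|^2+r_1^2\geq 2\sum_i r_i\alpha_i\beta_i$, and prove it by Cauchy--Schwarz plus AM--GM --- which is exactly the rank-one Eckart--Young--Mirsky bound specialized to a diagonal (here rectangular-diagonal, since the two local operator spaces may have different dimensions, which your bookkeeping handles) matrix. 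Your argument is more self-contained, needing no external matrix-analysis theorem, and it makes transparent that the statement holds for arbitrary product \emph{operators}, not just product states; the paper's argument is shorter and generalizes immediately to $\sum_{i\geq k+1}r_i^2\leq\|\rho_{AB}-X\|_2^2$ for any $X$ of operator Schmidt rank at most $k$, a strengthening that your explicit rank-one computation would require the full Eckart--Young--Mirsky theorem (rather than the $k=1$ case you derive by hand) to reproduce. One small point of hygiene that both you and the paper rely on: the local operator bases realizing the OSD can be taken Hermitian, so your coefficients $\alpha_i,\beta_j$ are indeed real; had they been complex, your chain of inequalities would still go through with moduli and conjugates inserted.
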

\begin{proof}
We recall that the OSCs $r_i(\rho_{AB})$ are the singular values of the correlation matrix $[C_{ij}(\rho_{AB})]_{ij}$, with
\[
C_{ij}(\rho_{AB}):=\llangle F_i\otimes G_j | \rho_{AB}\rrangle,
\]
where $\{F_i\}$ and $\{G_j\}$ are arbitrary local orthonormal bases for operators. We will use that, for any two matrices $M$ and $N$, with ordered singular values $\sigma_i(M)$ and $\sigma_i(N)$, respectively, it holds (see Corollary 7.3.5 in~\cite{horn2012matrix}), 
\[
\sum_i (\sigma_i(M) - \sigma_i(N))^2 \leq \|M-N\|^2_2.
\]
Notice that $r_i(\sigma_A\otimes\sigma_B)= 0$, for $i\geq 2$.
Thus,
\[
\begin{split}
\sum_{i\geq 2}r^2_i(\rho_{AB})
&= \sum_{i\geq 2}(r_i(\rho_{AB}) - r_i(\sigma_A\otimes\sigma_B))^2\\
&\leq \sum_{i}(r_i(\rho_{AB}) - r_i(\sigma_A\otimes\sigma_B))^2 \\
&\leq \| C(\rho_{AB}) - C(\sigma_A\otimes\sigma_B)\|^2_2 \\
&= \| C(\rho_{AB}-\sigma_A\otimes\sigma_B)\|^2_2 \\
&= \| \rho_{AB}-\sigma_A\otimes\sigma_B \|^2_2 ,
\end{split}
\]
having used that $\|C(X)\|_2 = \|X\|_2$ for any $X$.
\end{proof}

\begin{Proposition}\label{pro:boundrd2}
For any state $\rho_{AB}$ on $\mathbb{C}^d\otimes \mathbb{C}^d$, the lowest operator Schmidt coefficient obeys 
\[
r_{d^2} \leq \sqrt{\Tr(\rho^2) - \frac{1}{d^2}}.
\]
\end{Proposition}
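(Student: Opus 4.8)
The plan is to exploit the single normalization identity $\sum_{i=1}^{d^2} r_i^2 = \Tr(\rho^2)$ satisfied by the OSCs, combined with a lower bound on the \emph{largest} coefficient $r_1$. The whole argument rests on one observation: $r_1 \geq 1/d$ for every density operator on $\mathbb{C}^d \otimes \mathbb{C}^d$. To establish this, I would use that, as recalled in the main text, the OSCs are the singular values of the correlation matrix $C_{ij}(\rho) = \llangle F_i \otimes G_j \mid \rho \rrangle$ computed with respect to \emph{any} pair of local orthonormal operator bases $\{F_i\}$, $\{G_j\}$. I would choose these bases so that $F_1 = \openone_A/\sqrt{d}$ and $G_1 = \openone_B/\sqrt{d}$, which are legitimate normalized basis elements since $\|\openone_A/\sqrt{d}\|_2 = 1$. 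With this choice the $(1,1)$ entry is $C_{11}(\rho) = \Tr\!\big((\openone_A/\sqrt{d})\otimes(\openone_B/\sqrt{d})\,\rho\big) = \Tr(\rho)/d = 1/d$, where the trace-one constraint on $\rho$ enters decisively.

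Next I would invoke the elementary fact that the largest singular value of a matrix dominates the modulus of any of its entries, in particular any diagonal one: $r_1 = \sigma_1(C) = \max_{\|u\|=\|v\|=1}|u^\dagger C v| \geq |C_{11}(\rho)| = 1/d$. Hence $r_1^2 \geq 1/d^2$, and in particular $\Tr(\rho^2) = \sum_i r_i^2 \geq r_1^2 \geq 1/d^2$, so the square root in the claim is well defined.

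Finally, since $d \geq 2$ gives $d^2 \geq 2$, the indices $1$ and $d^2$ are distinct, so $r_1^2$ and $r_{d^2}^2$ are two \emph{different} nonnegative summands of $\sum_{i=1}^{d^2} r_i^2 = \Tr(\rho^2)$. Discarding all remaining nonnegative terms yields $\Tr(\rho^2) \geq r_1^2 + r_{d^2}^2 \geq 1/d^2 + r_{d^2}^2$, which rearranges to $r_{d^2}^2 \leq \Tr(\rho^2) - 1/d^2$ and hence to the stated bound. I note in passing that by instead retaining all $d^2-1$ tail terms and using $r_i \geq r_{d^2}$ one obtains the sharper $r_{d^2}^2 \leq (\Tr(\rho^2) - 1/d^2)/(d^2-1)$, from which the claimed inequality follows a fortiori; but the two-term argument already suffices.

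I expect the only genuinely nontrivial step to be the lower bound $r_1 \geq 1/d$. Everything afterward is trace-normalization bookkeeping. That step hinges on two facts that I would state explicitly: the freedom to place the normalized identities as the first elements of the operator bases (where the unit trace of $\rho$ is used), and the basis-independence of the singular values of the correlation matrix (so that the OSCs are indeed bounded below by this particular diagonal entry).
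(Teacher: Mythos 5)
Your proof is correct, but it takes a genuinely different route from the paper's. The paper obtains the proposition as an immediate corollary of Lemma~\ref{lem:dist}: a Mirsky-type perturbation inequality for singular values, $\sum_i(\sigma_i(M)-\sigma_i(N))^2\leq\|M-N\|_2^2$, is applied to the correlation matrices of $\rho_{AB}$ and of the maximally mixed product state $\frac{\openone}{d}\otimes\frac{\openone}{d}$, giving $\sum_{i\geq 2}r_i^2\leq\bnorm{\rho_{AB}-\frac{\openone}{d}\otimes\frac{\openone}{d}}{2}^2=\Tr(\rho^2)-\frac{1}{d^2}$, of which $r_{d^2}^2$ is a single summand. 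You instead prove the (equivalent) statement $r_1\geq 1/d$ directly, by placing the normalized identities as the first elements of the local operator bases and noting that the largest singular value of the correlation matrix dominates the modulus of its $(1,1)$ entry $\Tr(\rho)/d=1/d$; combining this with $\sum_i r_i^2=\Tr(\rho^2)$ yields the claim. Your argument is more elementary and self-contained, requiring no singular-value perturbation bound, and your parenthetical refinement $r_{d^2}^2\leq(\Tr(\rho^2)-1/d^2)/(d^2-1)$ is a genuine sharpening of the stated inequality. What the paper's route buys is the stronger and more general Lemma~\ref{lem:dist}, valid for an arbitrary product state $\sigma_A\otimes\sigma_B$, which the authors flag as of independent interest as a quantifier of total correlations; for the proposition itself, your shortcut is entirely adequate. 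Your explicit restriction to $d\geq 2$ (so that the indices $1$ and $d^2$ are distinct) is also appropriate, since the bound is vacuously false in the degenerate case $d=1$, which the paper leaves implicit.
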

\begin{proof}
Immediate, by using Lemma~\ref{lem:dist} in the case $\sigma_A\otimes \sigma_B = \frac{\I}{d}\otimes \frac{\I}{d}$, and the fact that
\[
\begin{split}
\bnorm{\rho_{AB} - \frac{\I}{d}\otimes \frac{\I}{d}}{2}^2
&= \Tr\left(\left(\rho_{AB} - \frac{\I}{d}\otimes \frac{\I}{d}\right)^2\right). \\
&= \Tr(\rho^2) - \frac{1}{d^2}
\end{split}
\]
\end{proof}

\begin{Theorem}
	If the OSCs of $\rab$ satisfy $\sum_ir_i\leq 1$, then $r_{d^2}\leq r_{\textup{CN}}$ with
	\[
	r_{\textup{CN}} = \frac{d(d^2-1)-\sqrt{d^2-1}}{d(d^2-1)^2+d^3}<\frac{1}{d^2}.
	\]
\end{Theorem}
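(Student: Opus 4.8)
The plan is to recast the claim as a small optimization over the vector of operator Schmidt coefficients and to use only two facts: the realignment hypothesis $\sum_i r_i\le 1$, and Proposition~\ref{pro:boundrd2}, which in squared form reads $r_{d^2}^2\le \Tr(\rho^2)-\tfrac{1}{d^2}=\sum_i r_i^2-\tfrac{1}{d^2}$. Writing $t:=r_{d^2}$ and treating $(r_1,\dots,r_{d^2})$ as free nonnegative reals with $r_1\ge\cdots\ge r_{d^2}=t$, I would bound the largest $t$ compatible with
\[
\sum_{i=1}^{d^2} r_i \le 1, \qquad \sum_{i=1}^{d^2} r_i^2 \ge t^2+\tfrac{1}{d^2}.
\]
Every genuine state satisfies both, so an upper bound for this relaxed problem is an upper bound for $r_{d^2}$, which is exactly what we want.

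Next I would reduce to a one-parameter extremal profile. With $t$ fixed (so every $r_i\ge t$), the realignment constraint pushes the coefficients down while the Proposition constraint forces $\sum_i r_i^2$ to be large; these pull against each other. The key observation is that, for a prescribed value of $\sum_i r_i^2$, the quantity $\sum_i r_i$ is minimized by concentrating all the ``excess'' above $t$ into a single coefficient, since squares grow superlinearly. Hence the extremal profile is $r_2=\cdots=r_{d^2}=t$ with $r_1=a$, and feasibility of a given $t$ reduces to
\[
a+(d^2-1)t\le 1, \qquad a^2+(d^2-2)t^2\ge \tfrac{1}{d^2}.
\]
I would justify the concentration step by an exchange argument: moving weight from a smaller free coordinate to a larger one keeps $\sum_i r_i$ fixed while not decreasing $\sum_i r_i^2$, after which $\sum_i r_i$ can be lowered to restore the quadratic constraint to equality. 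This reduction is the step I expect to require the most care.

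Finally I would solve for the maximal admissible $t$. At the optimum both displayed inequalities are active, so $a=1-(d^2-1)t$ and $a^2+(d^2-2)t^2=\tfrac{1}{d^2}$; eliminating $a$ yields a quadratic in $t$. A convenient simplification occurs: its (quarter-)discriminant collapses to $(d^2-1)/d^2$, so the square root contributes precisely the factor $\sqrt{d^2-1}$ appearing in the numerator of $r_{\textup{CN}}$, and choosing the smaller root (the one respecting $a\ge t\ge 0$ and the reality of $a$) identifies the maximal admissible $t$ with $r_{\textup{CN}}$. The remaining claim $r_{\textup{CN}}<\tfrac{1}{d^2}$ I would dispatch by clearing denominators, where after cross-multiplication the inequality reduces to $-(d^2)\sqrt{d^2-1}<d$, which is trivially true. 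Combined with the bound $\CDP_A(\rab)\le r_{d^2}\sqrt{d_Ad_B}$ from Theorem~\ref{appendix:16} specialized to $d_A=d_B=d$, this gives $\CDP_A(\rab)\le r_{\textup{CN}}\,d<\tfrac{1}{d}$, as announced in the main text.
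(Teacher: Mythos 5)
Your proposal is correct and follows essentially the same route as the paper: both relax the problem to an optimization over the vector of OSCs subject to $\sum_i r_i\le 1$ and $r_{d^2}^2\le\sum_i r_i^2-1/d^2$ (Proposition~\ref{pro:boundrd2}), reduce to the extremal profile $r_1=1-(d^2-1)t$, $r_2=\cdots=r_{d^2}=t$ by a Schur-convexity/exchange argument, and solve the resulting quadratic, whose quarter-discriminant indeed collapses to $(d^2-1)/d^2$. One bookkeeping caveat: carrying out that final step actually yields the denominator $d(d^2-1)^2+d^3-2d=d(d^4-d^2-1)$ rather than the $d(d^2-1)^2+d^3$ printed in the theorem (an apparent typo in the paper that your write-up inherits by asserting the smaller root equals $r_{\textup{CN}}$ as stated), but both expressions remain below $1/d^2$, so the conclusion is unaffected.
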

\begin{proof}
We want to find the maximal value $r_{d^2}$ can assume under the condition
\begin{equation}\label{eq:cond1}
\sum_ir_i\leq 1.
\end{equation}
We notice that Proposition~\ref{pro:boundrd2} implies that the OSCs of every state respect
\begin{equation}\label{eq:cond2}
r_{d^2}^2 \leq \sum_i r_i^2 - \frac{1}{d^2}
\end{equation}
(recall that $\Tr(\rho^2) = \sum_i r_i^2$). Thus, we want to find the maximum of $r_{d^2}$ under conditions \eqref{eq:cond1} and \eqref{eq:cond2}. Notice that, by definition, $r_i \geq 0$, and $r_1 \geq r_2 \geq \ldots \geq r_d^2$.

It it clear that the maximum $r_{d^2}$ will be found for the condition \eqref{eq:cond1} being satisfied with equality, since, if the left-hand side of \eqref{eq:cond1} was smaller than 1, then we could increase all the OSCs, including $r_{d^2}$, to make it equal to 1. Moreover, for fixed $r_{d^2}$, the largest value of $\sum_i r_i^2$ is achieved for $r_2 =r_3 = \ldots = r_{d^2} = r $ and $r_1 = 1 - r$. This is due to the fact that $\sum_i r_i^2$  is Schur convex. Thus, we can find the maximal $r_{d^2}$ compatible with the constraints, by finding the largest $r$ such that
\[
r^2 \leq (d^2-1)r^2 +(1 - (d^2-1)r)^2 - \frac{1}{d^2}.
\]
One finds that such a value is given by 
\[
	r_{CN} = \frac{d(d^2-1)-\sqrt{d^2-1}}{d(d^2-1)^2+d^3}<\frac{1}{d^2}.
\]
\end{proof}

\end{document}